\documentclass[journal]{IEEEtran}

\usepackage{amsmath, amsfonts, amssymb}
\usepackage{amsthm}
\usepackage{bm}

\usepackage{algorithm}
\usepackage{algpseudocode}

\usepackage{graphicx}
\graphicspath{{figure/}}
\usepackage{xcolor}
\usepackage[caption=false, font=footnotesize]{subfig}
\usepackage{multirow}
\usepackage{booktabs}
\usepackage{capt-of}  

\usepackage{cite}
\usepackage[hidelinks]{hyperref}

\newtheorem{theorem}{Theorem}
\newtheorem{lemma}[theorem]{Lemma}
\newtheorem{proposition}[theorem]{Proposition}

\newtheorem{corollary}[theorem]{Corollary}
\newtheorem*{remark}{Remark}

\usepackage{xparse}  

\newif\ifshowtodo
\showtodotrue

\title{Switching-Reference Voltage Control for Distribution Systems with AI-Training Data Centers}

\author{Mingyuan~Yan,
        Trager~Joswig-Jones,
        Baosen~Zhang,
        Yize~Chen,
        and~Wenqi~Cui%
\thanks{M. Yan and W. Cui are with the Department of Electrical and Computer Engineering, New York University, New York, NY 10012, USA (e-mail: my3188@nyu.edu; wenqicui@nyu.edu).}%
\thanks{T. Joswig-Jones and B. Zhang are with the Department of Electrical and Computer Engineering, University of Washington, Seattle, WA 98195, USA (e-mail: joswitra@uw.edu; zhangbao@uw.edu).}%
\thanks{Y. Chen is with the Department of Electrical and Computer Engineering, University of Alberta, Edmonton, AB T6G 2R3, Canada (e-mail: yize.chen@ualberta.ca).}%
\thanks{This work was partially supported by the Henry Luce Foundation.}}

\begin{document}
\maketitle

\begin{abstract}

Large-scale AI training workloads in data centers exhibit rapid and periodic power swings that can induce voltage deviations in power distribution systems. Existing voltage controllers treat these swings as a generic disturbance, leading to high control effort but still large voltage violations. However, such emerging loads are not random: they alternate between two distinct operating phases. This paper exploits this structure with a decentralized switching-reference voltage control framework. By switching each voltage reference in step with the workload phases, the controller cancels the phase-induced voltage shift, holding the voltage within limits with low control effort. Because real-time communication between buses is not always available, the controller is designed to infer the reference from local voltage measurements. This paper further proves convergence under deadband and saturation. In case studies on real AI training power traces, the switching reference suppresses voltage violations, sometimes eliminating them entirely, while reducing the control effort by approximately an order of magnitude compared with conventional droop control. Further experiments confirm that it remains effective with multiple data centers and internal load smoothing.
\end{abstract}

\begin{IEEEkeywords}
AI-training data centers, voltage control, distribution systems, decentralized control.
\end{IEEEkeywords}

\vspace{-0.5cm}
\section{Introduction}

\IEEEPARstart{T}{he} rapid growth of artificial intelligence (AI) has led to the deployment of increasingly large training workloads in modern data centers.
Empirical scaling laws show that model performance improves predictably with increasing model size, data, and computational budget~\cite{kaplan2020scaling, hoffmann2022training}.
In turn, electricity demand from AI training has also grown rapidly~\cite{iea_energy_and_ai}.
Consequently, the scale of modern data centers makes their power consumption an increasingly important factor in power system planning and operation.

\begin{figure}[!t]
    \centering
    \includegraphics[width=0.9\columnwidth]{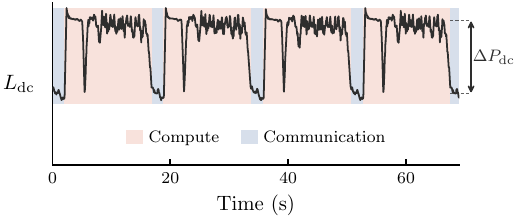}
    \vspace{-0.3cm}
    \caption{
    Data center power consumption $L_{\mathrm{dc}}$ from an at-scale training job on DGX-H100 racks~\cite{choukse2025power}.
    The trace alternates between high-power compute phases and low-power communication phases.
    These abrupt phase transitions are the load structure that the proposed switching-reference controller exploits.
    \vspace{-0.6cm}
    }
    \label{fig:train_power}
\end{figure}

One challenge introduced by data centers is the highly dynamic load behavior associated with large-scale AI training.
 Unlike conventional demand, which typically evolves gradually over time, AI training loads are synchronized and produce large active power swings within seconds in a data center's aggregate demand. These swings can reach up to hundreds of megawatts~\cite{grattafiori2024llama, chen2025electricity}. A representative power trace is shown in Fig.~\ref{fig:train_power}, drawn from an at-scale training job reported in a recent study~\cite{choukse2025power}. In a distribution system, these power swings can cause large voltage deviations.
Without effective regulation, voltages may exceed their allowable operating limits, challenging the operation of both data centers and distribution networks.

Voltage control methods have been widely used to reduce voltage violations in power distribution systems. Mechanical devices
such as tap-changing transformers can adjust voltage, but they are unsuitable for frequent adjustment due to equipment lifespan limitations~\cite{agalgaonkar2014distribution}.
Inverter-based resources such as photovoltaic and battery storage inverters can instead update reactive power on sub-second timescales, and have emerged as valuable resources for fast voltage support. Building on this capability, voltage droop control and its variants are proposed to regulate reactive power proportionally with voltage deviations~\cite{zhu2015fast, bernstein2018network, ieee1547, cui2025leveraging}, with established guarantees on stability and performance. Many other methods have been proposed to optimize or learn the control response, including model predictive control with explicit forecasts~\cite{valverde2013mpc}, reinforcement learning with offline-trained policies~\cite{zhang2021drlvoltvar}, and online feedback optimization driven by real-time measurements~\cite{mao2026feedback}.

Complementary to grid-side control, data center power smoothing methods have been explored using internal infrastructure such as workload-aware power capping~\cite{sakalkar2020data}, server-level power modulation~\cite{hou2020decentralized, chen2025voltage}, coordinated uninterruptible power supply (UPS) and cooling control~\cite{xie2025enhancing}, and accelerator utilization control~\cite{liang2026gpu}.  However, existing approaches typically either overlook the underlying power dynamics of AI training workloads, or treat these power swings as unstructured disturbances. As a result, they may require unnecessarily high control effort while providing limited improvement in worst-case voltage deviations.

The ineffectiveness of existing control laws is caused by the switching between distinct phases in the power consumption of AI training loads. 
As shown in Fig.~\ref{fig:train_power}, AI training workloads exhibit a two-phase pattern, alternating between a high-power compute phase and a low-power communication phase, with an abrupt transition between them. 
Since the most severe voltage violations arise during these abrupt transitions,
existing regulation methods designed for smoothly varying loads
may not respond sufficiently fast to mitigate the worst voltage violations.
In addition,
responding to large power swings incurs high control effort.
In this work,
we aim to answer the following question:

\emph{How should we design a voltage control law to mitigate voltage deviations induced by AI training workloads,
while avoiding excessive control effort?}


The proposed solution is to exploit the structured switching dynamics of AI training loads in the design of a decentralized switching-reference voltage controller for distribution systems.
The key idea is to proactively switch the control references to prepare for possible abrupt load increases or decreases, rather than relying solely on reflexive actions after voltage deviations have already occurred.
More concretely, rather than trying to drive the voltage to the nominal operating point, we dynamically choose more suitable voltage references.
To this end,
we first characterize the two-phase structure of large-scale AI training workloads,
and then design a switching-reference droop controller
that aligns the voltage reference with the current phase of the data center load.
On this basis, we establish conditions for voltage convergence, and show how to choose the reference levels to cancel the phase-induced disturbance and place the bias to minimize the worst-case voltage deviation.
Further, we develop a rule for switching voltage references based solely on local voltage measurements,
enabling simple local implementation
while significantly reducing control effort.
Unlike centralized or coordinated schemes, the controller requires no communication between buses.
The analysis further extends to a practical implementation with deadband and action saturation.
Case studies with real data show that
the proposed approach substantially reduces voltage violations and control effort
compared with conventional fixed-reference droop control.
It applies to multiple data centers, generalizes across diverse AI training power traces, and is compatible with internal data center load smoothing.

\section{Network Model and Problem Formulation}

\subsection{Notation}

Throughout the paper, $|\cdot|$ denotes elementwise absolute value; vector inequalities, $\max$, $\min$, $\limsup$, $\liminf$, and elementwise matrix bounds are interpreted elementwise. Matrix orderings $\succ,\prec$ for symmetric matrices follow the positive-definite convention. Vectors are denoted in lower-case bold and matrices are denoted in upper-case bold, and scalars are unbolded unless otherwise specified. 
Subscript $i$ indicates variables for bus $i$.
For scalars $a_{1,t}, \cdots, a_{n,t}$, each associated with one of the $n$ buses at time $t$, $\bm{a}_t:=(a_{1,t},\cdots, a_{n,t})^\top$ stacks them into a column vector. For a matrix $\mathbf M$, $M_{ij}$ denotes its $(i,j)$ entry and $\bm m_j$ its $j$-th column.

\subsection{Network Model and Data Center Load Characteristics}
\label{sec:dc_network}

We consider a radial distribution system with $n$ buses and a feeder. Let $\bm v_t, \bm p_t, \bm q_t \in \mathbb{R}^n$ denote the vectors of bus voltage magnitudes, net active power injections, and net reactive power injections at time $t$ (positive for generation, negative for load). Following common practice, voltages are normalized so that the nominal value at all buses is $1$ per unit (p.u.).
We adopt the LinDistFlow model in~\cite{baran1989network, zhu2015fast} to characterize voltages affected by active and reactive power across the distribution system
\vspace{-0.1cm}
\begin{equation}
\bm v_t = \mathbf R \bm p_t + \mathbf X \bm q_t + \mathbf{1},
\label{eq:lin_voltage}
\end{equation}
 where $\mathbf{1}$ is the all-ones vector and $\mathbf R, \mathbf X \in \mathbb{R}^{n\times n}$ are symmetric positive definite matrices with nonnegative elements; $\mathbf R$ is built from the line resistances and $\mathbf X$ from the line reactances of the network. 

Suppose an AI training data center is connected at bus $j$, with its net power injection at time $t$ denoted by $p_{j,t}$.
Such workloads proceed as a sequence of training iterations,
each consisting of two phases as illustrated in Fig.~\ref{fig:train_power}:
a \emph{high-power compute phase}, during which accelerators (e.g., GPUs, TPUs, and Trainium) process a mini-batch of training data through forward and backward propagation,
and a \emph{low-power communication phase}, during which accelerators reduce gradients and broadcast parameters across the cluster, and occasionally write model checkpoints to storage.
Unlike conventional slowly varying demand, the aggregate data center load follows this two-phase pattern rather than averaging out, because all accelerators in a job execute the same iteration synchronously~\cite{choukse2025power}. The accelerators dominate a training server's power and drive these fast two-phase swings~\cite{patel2024characterizing}, while cooling and other loads vary slowly and enter our model additively. Accordingly, we model the net injection at this bus as
\vspace{-0.1cm}
\begin{equation}
p_{j,t} = \bar p_j(\phi_t) + w_t ,
\end{equation}
where $\phi_t\in\{0,1\}$ indicates the phase, $0$ in communication and $1$ in compute; $\bar p_j(\phi_t)$ is the phase-mean injection, the average injection in phase $\phi_t$, combining the slowly varying contribution of other loads with the phase-mean data center load; and $w_t$ is the intra-phase fluctuation. The data center draws more power in compute, so $\bar p_j(1) < \bar p_j(0) < 0$. During a phase transition, $p_{j,t}$ swings from one level to the other, an abrupt change of magnitude $\Delta P_{\mathrm{dc}} := |\bar p_j(1)-\bar p_j(0)|$. These large and rapid swings cause voltage deviations across the network, which can exceed the admissible $\pm5\%$ band around nominal~\cite{national1996american}. Throughout this paper, we assume that active power variations at buses other than the data center are small relative to those at the data center. Accordingly, $p_{i,t}$ is treated as constant for all $i \neq j$.

To reduce these voltage violations, we can regulate voltage through the reactive power injections $\bm q_t$, supplied by inverter-based resources deployed at multiple buses across the network~\cite{ieee1547}. These resources, which may include battery energy storage systems and photovoltaic inverters, are not restricted to the buses connected with data centers. At the same time, the controller should keep the control effort low by avoiding large or frequent control actions. Beyond reducing voltage violations with low control effort, we seek a simple and practical controller with the following properties:
\begin{itemize}
\item \emph{Decentralized:} each bus sets its reactive power injection from only its local voltage $v_{i,t}$, since real-time communication between buses is often not feasible.
\item \emph{Structural:} it relies only on the two-phase structure of the load, namely the swing between two distinct levels. It should not depend on the exact switching times, transition magnitude $\Delta P_{\mathrm{dc}}$, or noise $w_t$, and does not rely on predicting them.
\item \emph{Adaptive:} the same controller applies to any load with this structure, without retuning.
\end{itemize}

\subsection{Voltage Control and Challenges}
\label{sec:challenges}
A standard decentralized controller is the Volt--VAR droop control and its variants~\cite{ieee1547, bernstein2018network, cui2022decentralized}, where reactive power injection is adjusted linearly with the deviation of local voltage $v_{i,t}$ from its nominal value:
\vspace{-0.1cm}
\begin{equation}
\bm{q}_{t+1} = \bm{q}_t - \mathbf{K} \bigl(\bm{v}_t - \bm{1}\bigr),
\label{eq:droop_fixed}
\end{equation}
where $\mathbf{K}=\mathrm{diag}(k_1,\ldots,k_n)\succ 0$ and each $k_i$ is the droop coefficient at bus $i$.

This droop control is \emph{reflexive}: it acts only after a voltage deviation has appeared, not before the load change that causes it. The phase transitions of AI training workloads are abrupt, so the voltage deviation reaches its largest value before the controller can respond. Since the phase transitions are frequent, these large voltage deviations persist.

By contrast, the proposed controller is \emph{proactive}, as it exploits the two-phase structure of AI training workloads to switch the voltage reference between the two phases. The controller relies only on real-time local voltage measurements, not on predicted load or voltage.
As we will show, this approach maintains voltages within the desired range with smaller and less frequent control actions.

\section{Switching-Reference Control}
\label{sec:grid-side}

\subsection{Switching-Reference Control Law}

\begin{figure}[!t]
    \centering
    \includegraphics[width=0.85\columnwidth]{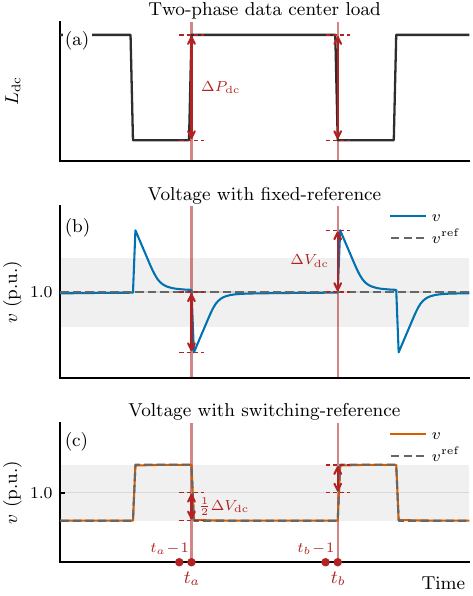}
    \vspace{-0.2cm}
    \caption{
    Voltage response at the data center bus to phase transitions in load. The load step $\Delta P_{\mathrm{dc}}$ produces a voltage step $\Delta V_{\mathrm{dc}}=R_{jj}\Delta P_{\mathrm{dc}}$, with $R_{jj}$ the resistance from the substation to the data center bus. 
    (a) Data center load alternating between the two phases.
    (b) Fixed-reference droop ($v^{\mathrm{ref}}\equiv 1$): each transition induces a voltage shift of $\approx \Delta V_{\mathrm{dc}}$.
    (c) Switching-reference droop: the worst post-transition voltage deviation reduces to $\approx\tfrac12 \Delta V_{\mathrm{dc}}$.
    Aligning the voltage reference with the load phase halves the worst-case voltage deviation relative to the fixed reference.
    \vspace{-0.5cm}
    }
    \label{fig:power_voltage_causality}
\end{figure}

The proposed controller takes the form
\vspace{-0.1cm}
\begin{equation}
\bm{q}_{t+1} = \bm{q}_t - \mathbf{K} \bigl(\bm{v}_t - \bm{v}_t^{\mathrm{ref}}\bigr),
\label{eq:droop}
\end{equation}
which differs from~\eqref{eq:droop_fixed} in that the feedback reference $\bm{v}_t^{\mathrm{ref}}$ is time-varying.
To see why this reference is designed to be time-varying, we substitute the controller~\eqref{eq:droop} into the LinDistFlow model~\eqref{eq:lin_voltage} and equivalently represent the transition of the voltage deviation as follows
\[
\bm v_{t+1}-\bm v_{t+1}^{\mathrm{ref}}
=
(\mathbf I-\mathbf X\mathbf K)\bigl(\bm v_t-\bm v_t^{\mathrm{ref}}\bigr)
+\bigl(\mathbf R\,\Delta\bm p_t-\Delta\bm v_t^{\mathrm{ref}}\bigr),
\]
where $\Delta\bm p_t:=\bm p_{t+1}-\bm p_t$ and $\Delta\bm v_t^{\mathrm{ref}}:=\bm v_{t+1}^{\mathrm{ref}}-\bm v_t^{\mathrm{ref}}$. The detailed derivation is given
in Appendix~\ref{app:error_dynamics}. A load change $\Delta\bm p_t$ induces voltage deviations of the form $\mathbf R\Delta\bm p_t$. Our key design is to switch the voltage reference so that $\Delta\bm v_t^{\mathrm{ref}}$ approximately cancels the load-induced deviation $\mathbf R\Delta\bm p_t$. In the following, we first illustrate the intuition behind this cancellation using a simplified two-phase data center load model. We then present the full formulation for constructing the time-varying voltage reference.


Consider the idealized data center load in Fig.~\ref{fig:power_voltage_causality}(a),
where intra-phase fluctuations have been smoothed out and the data center switches between the two phases with transition magnitude
$\Delta P_{\mathrm{dc}}$.
This load change $\Delta P_{\mathrm{dc}}$ causes a voltage step change $\Delta V_{\mathrm{dc}}:=R_{jj}\Delta P_{\mathrm{dc}}$, where $R_{jj}$ is the resistance of the path from the substation to the data center bus $j$.
Conventional fixed-reference droop ($\bm v_t^{\mathrm{ref}}\equiv\bm 1$) is reflexive:
it can only act after a voltage deviation has appeared.
The full voltage step change $\Delta V_{\mathrm{dc}}$ then appears as the largest voltage deviation, as in Fig.~\ref{fig:power_voltage_causality}(b).
This spike of voltage deviations is independent of the droop gain $\mathbf{K}$;
increasing $\mathbf{K}$ accelerates its decay but does not reduce its peak.

The proposed controller is proactive: it switches the reference between two levels to keep the voltage deviation small, instead of letting a large deviation appear and then correcting it. To convey the intuition behind the proposed design, let $t_a$ and $t_b$ be the two transition instants in a cycle, the communication-to-compute and compute-to-communication transitions. At $t_a$, the load at bus $j$ increases by $\Delta P_{\mathrm{dc}}$. At $t_a-1$, still in the communication phase, the reference is at the communication-phase level $v_{j,t_a-1}^{\mathrm{ref}}=1+\tfrac12 \Delta V_{\mathrm{dc}}$, and the voltage matches this reference. The transition then causes the full voltage step decrease $\Delta V_{\mathrm{dc}}$, so the bus voltage at $t_a$ only reaches
\vspace{-0.2cm}
\begin{equation*}
v_{j,t_a}\approx\underbrace{\Bigl(1+\tfrac12 \Delta V_{\mathrm{dc}}\Bigr)}_{v_{j,t_a-1}^{\mathrm{ref}}}-\underbrace{\Delta V_{\mathrm{dc}}\vphantom{\Bigl(\Bigr)}}_{\text{step change}}=1-\tfrac12 \Delta V_{\mathrm{dc}}.
\end{equation*}
After detecting such a step voltage change, the voltage reference at time step $t_a$ is switched to $v_{j,t_a}^{\mathrm{ref}}=1-\tfrac12 \Delta V_{\mathrm{dc}}$, and this reference is maintained until the next step voltage change is detected at time step $t_b$. The subsequent compute-to-communication transition at $t_b$ is symmetric. At $t_b-1$, the reference is at the compute-phase level $v_{j,t_b-1}^{\mathrm{ref}}=1-\tfrac12 \Delta V_{\mathrm{dc}}$. The load then decreases by $\Delta P_{\mathrm{dc}}$, so the voltage steps up by $\Delta V_{\mathrm{dc}}$ to $1+\tfrac12 \Delta V_{\mathrm{dc}}$, which equals the communication-phase voltage reference. This voltage step change is thus split evenly around 1 p.u., as illustrated by Fig.~\ref{fig:power_voltage_causality}(c). The worst voltage deviation is therefore half that of the fixed reference, within $\pm\tfrac12 \Delta V_{\mathrm{dc}}$. Because the voltage follows the switching reference, the droop needs smaller control actions, rather than the high control effort resulting from a large voltage deviation.

Motivated by the above intuition, we therefore implement the time-varying reference as a switching reference that alternates between the compute and communication phases.
More formally, we parameterize the reference as
\vspace{-0.2cm}
\begin{equation}
v_{i,t}^{\mathrm{ref}} = 1 + s_{i,t}  v_i^{\mathrm{amp}},
\label{eq:tv_ref_nobias}
\end{equation}
where $s_{i, t}\in\{-1,+1\}$ selects one of the two reference levels,
and $v_i^{\mathrm{amp}}\in\mathbb{R}_{\ge 0}$ determines their magnitude.
The sign $s_{i,t}=+1$ corresponds to the communication phase ($\phi_t=0$) and gives the higher reference $1+v_i^{\mathrm{amp}}$, while $s_{i,t}=-1$ corresponds to the compute phase ($\phi_t=1$) and gives the lower reference $1-v_i^{\mathrm{amp}}$.
We will later show how to set $s_{i,t}$ from local voltage measurements to track the phase $\phi_t$.

In practice,
voltage fluctuations exist within each phase
and may differ in magnitude between the two phases, for instance larger in the compute phase than in the communication phase.
To accommodate this asymmetry,
we introduce a bias term so the two reference levels need not be symmetric about $1$~p.u.
Accordingly, we generalize~\eqref{eq:tv_ref_nobias} to
\begin{equation}
v_{i,t}^{\mathrm{ref}}
=
v_{i}^{\mathrm{bias}}
+
s_{i,t}  v_i^{\mathrm{amp}},
\label{eq:tv_ref}
\end{equation}
where
$v_i^{\mathrm{bias}}\in\mathbb{R}$ is the bias of voltage reference at bus $i$.

Under this parameterization,
the two reference levels for each bus $i$ are centered at $v_{i}^{\mathrm{bias}}$ and separated by $2v_i^{\mathrm{amp}}$.
The remaining question is how to set these reference levels at the different buses to match the phase-induced voltage shift.

\subsection{Convergence Analysis and Reference Design}
\label{sec:deviation_analysis}

We now analyze the closed-loop deviation from the reference and design the references to reduce the worst-case voltage deviation from 1 p.u.

Writing the load and reference changes as the disturbance $\bm d_t := \mathbf R\,\Delta\bm p_t-\Delta\bm v_t^{\mathrm{ref}}$,
the recursion of voltage 
deviation is
\begin{equation}
\bm{v}_{t+1}-\bm{v}_{t+1}^{\mathrm{ref}}
=
(\mathbf I - \mathbf X \mathbf K)\bigl(\bm{v}_t-\bm{v}_t^{\mathrm{ref}}\bigr)
+
\bm d_t.
\label{eq:linear_error_system}
\end{equation}
Although $\Delta\bm v_t^{\mathrm{ref}}$ involves $\bm v_{t+1}^{\mathrm{ref}}$, it appears only in the analysis: the implemented controller~\eqref{eq:droop} is causal and does not use future values.

The deviation from the reference $\bm v_t-\bm v_t^{\mathrm{ref}}$ is governed by the closed-loop matrix $\mathbf I-\mathbf X\mathbf K$ and the disturbance $\bm d_t$. We design $\mathbf K$ to make $\mathbf I-\mathbf X\mathbf K$ contractive, so this deviation decays when no persistent disturbance acts. 
The conditions are given in the following theorem:

\begin{theorem}[Convergence of voltage deviation]
\label{thm:deviation_bound}

If the diagonal gain matrix $\mathbf K$ satisfies
\(
0\prec \mathbf K \prec 2\mathbf X^{-1},
\)
then there exist a nonnegative matrix $\mathbf C$
and a constant $0<\epsilon<1$ such that
for any initial time $t_0$ and all $t\ge t_0$,
\begin{equation}
|\bm v_t-\bm v_t^{\mathrm{ref}}|
\le
\mathbf C\,\epsilon^{t-t_0}
|\bm v_{t_0}-\bm v_{t_0}^{\mathrm{ref}}|
+
\sum_{\tau=t_0}^{t-1}
\mathbf C\,\epsilon^{t-1-\tau}|\bm d_\tau|.
\end{equation}
In particular, for any  $\bar{\bm d}$ where
\(
|\bm d_t|\le \bar{\bm d}
\)
for all $t$, 
\begin{equation}
\limsup_{t\to\infty}\;
|\bm v_t-\bm v_t^{\mathrm{ref}}|
\le
\frac{1}{1-\epsilon}\mathbf C\,\bar{\bm d}.
\label{eq:deviation_bound_general}
\end{equation}
\end{theorem}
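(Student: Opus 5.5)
The plan is to show first that $\mathbf A:=\mathbf I-\mathbf X\mathbf K$ is Schur stable, and then to unroll the linear recursion~\eqref{eq:linear_error_system} and bound each term elementwise.

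\textbf{Spectral radius of $\mathbf A$.} $\mathbf A$ is not symmetric, but it is similar to a symmetric matrix. Since $\mathbf K\succ 0$ is diagonal, $\mathbf K^{1/2}$ exists and
\[
\mathbf K^{1/2}\mathbf A\mathbf K^{-1/2}=\mathbf I-\mathbf K^{1/2}\mathbf X\mathbf K^{1/2}=:\mathbf I-\mathbf M ,
\]
where $\mathbf M$ is symmetric positive definite because $\mathbf X\succ 0$. The condition $\mathbf K\prec 2\mathbf X^{-1}$ is equivalent, by congruence with $\mathbf K^{1/2}$, to $\mathbf K^{1/2}\mathbf X\mathbf K^{1/2}\prec 2\mathbf I$.

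To see this, note that $\mathbf K\prec 2\mathbf X^{-1}$ is equivalent to $\mathbf X^{1/2}\mathbf K\mathbf X^{1/2}\prec 2\mathbf I$. This matrix has the same nonzero eigenvalues as $\mathbf K^{1/2}\mathbf X\mathbf K^{1/2}$, since both are products of $\mathbf X^{1/2}\mathbf K^{1/2}$ and its transpose taken in the two orders.

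Hence the eigenvalues of $\mathbf M$ lie in $(0,2)$, and those of $\mathbf I-\mathbf M$ lie in $(-1,1)$. Let $\rho:=\|\mathbf I-\mathbf M\|_2<1$.

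\textbf{Unrolling and elementwise bounds.} With $\bm e_t:=\bm v_t-\bm v_t^{\mathrm{ref}}$, induction on~\eqref{eq:linear_error_system} gives
\[
\bm e_t=\mathbf A^{t-t_0}\bm e_{t_0}+\sum_{\tau=t_0}^{t-1}\mathbf A^{t-1-\tau}\bm d_\tau .
\]
Taking elementwise absolute values gives $|\bm e_t|\le |\mathbf A^{t-t_0}|\,|\bm e_{t_0}|+\sum_{\tau}|\mathbf A^{t-1-\tau}|\,|\bm d_\tau|$. It therefore suffices to find a nonnegative $\mathbf C$ with $|\mathbf A^k|\le \mathbf C\epsilon^k$ entrywise for all $k\ge 0$.

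From $\mathbf A^k=\mathbf K^{-1/2}(\mathbf I-\mathbf M)^k\mathbf K^{1/2}$, each entry satisfies
\[
|(\mathbf A^k)_{il}|\le \|\mathbf K^{-1/2}\|_2\,\rho^k\,\|\mathbf K^{1/2}\|_2=\sqrt{\kappa(\mathbf K)}\,\rho^k .
\]
Choose $\epsilon:=\max(\rho,\tfrac12)\in(0,1)$, which handles the degenerate case $\rho=0$. Then take $\mathbf C:=\sqrt{\max_i k_i/\min_i k_i}\,\mathbf 1\mathbf 1^\top$, which is nonnegative. This yields the first inequality, uniformly in $t_0$, because $\mathbf A$ is time-invariant.

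\textbf{Limsup bound.} Substitute $|\bm d_\tau|\le\bar{\bm d}$ and bound the geometric sum by $\sum_{k\ge0}\epsilon^k=1/(1-\epsilon)$. The transient term $\mathbf C\epsilon^{t-t_0}|\bm e_{t_0}|$ vanishes as $t\to\infty$, which gives~\eqref{eq:deviation_bound_general}.

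\textbf{Main obstacle.} The main subtlety is the first step: $\mathbf A$ is not symmetric, so the eigenvalue bound cannot be read off directly. It has to go through the similarity transform and the congruence equivalence between $\mathbf K\prec2\mathbf X^{-1}$ and $\mathbf K^{1/2}\mathbf X\mathbf K^{1/2}\prec2\mathbf I$. Once that is in place, everything else is routine bookkeeping.
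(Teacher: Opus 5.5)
Your proof is correct and follows essentially the same route as the paper: show $\mathbf A=\mathbf I-\mathbf X\mathbf K$ is similar to a symmetric matrix with spectrum in $(-1,1)$, bound $|\mathbf A^k|\le\mathbf C\epsilon^k$ entrywise, unroll the recursion, and sum the geometric series. The differences are minor. You conjugate by $\mathbf K^{1/2}$ instead of $\mathbf X^{1/2}$, which costs you the extra step relating $\mathbf K^{1/2}\mathbf X\mathbf K^{1/2}$ to $\mathbf X^{1/2}\mathbf K\mathbf X^{1/2}$; the paper's congruence by $\mathbf X^{1/2}$ gives the eigenvalue bound directly. You get an explicit $\mathbf C=\sqrt{\kappa(\mathbf K)}\,\mathbf 1\mathbf 1^\top$ from the spectral norm rather than the paper's $|\mathbf P|\,|\mathbf P^{-1}|$. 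Finally, your choice $\epsilon=\max(\rho,\tfrac12)$ also covers the degenerate case $\rho(\mathbf A)=0$, which the paper's choice $\epsilon:=\rho(\mathbf A)$ does not exclude.
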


The proof is given in Appendix~\ref{app:deviation_bound}.
Theorem~\ref{thm:deviation_bound} shows that the voltage deviation $|\bm v_t-\bm v_t^{\mathrm{ref}}|$
is driven by the disturbance 
\(
\bm d_t
=
\mathbf R\Delta\bm p_t-\Delta\bm v_t^{\mathrm{ref}}.
\) For conventional fixed-reference droop control, where $\Delta\bm v_t^{\mathrm{ref}}=\bm 0$, $\bm d_t=\mathbf R\Delta\bm p_t$, so a large $\Delta\bm p_t$ results directly in a large voltage deviation.

Next, we show that a switching reference reduces $|\bm d_t|$ and hence the steady-state voltage deviation.
The active power increment at the data center can be decomposed as
\(
\Delta p_{j,t}
=
(\bar p_j(\phi_{t+1})-\bar p_j(\phi_t))
+
(w_{t+1}-w_t),
\)
where the first term represents the structured shift of power caused by phase transitions of data center loads,
and the second term captures the intra-phase fluctuations.
To cancel the voltage deviation induced by phase transitions,
we design the reference to depend only on the data center phase \(\phi_t\) rather than directly on time \(t\),
\vspace{-0.1cm}
\begin{equation}
\bm v_t^{\mathrm{ref}}
=
\bar{\bm v}^{\mathrm{ref}}(\phi_t),
\label{eq:ideal_tracking}
\end{equation}
where $\bar{\bm v}^{\mathrm{ref}}(0)$ and
$\bar{\bm v}^{\mathrm{ref}}(1)$ denote the reference voltages
associated with the two data center phases.
From~\eqref{eq:lin_voltage}, the voltage change induced by a data center
phase transition is
\(
\Delta\bm v_{\mathrm{phase}}
:=
\mathbf R\bigl(\bm{\bar p}(0)-\bm{\bar p}(1)\bigr),
\)
where $\bm{\bar p}(0), \bm{\bar p}(1)$ are the phase-mean injections in the two phases. Since only the data center load changes between phases, $\bm{\bar p}(0)-\bm{\bar p}(1) = (\bar p_j(0)-\bar p_j(1))\bm{e}_j$, so $\Delta\bm v_{\mathrm{phase}}$ acts only through $\bm r_j$, the column of $\mathbf R$ for bus $j$, which maps the data center's power change to the voltage shift at every bus.
The next proposition shows that a suitable choice of the two phase-dependent reference levels cancels the phase-switching disturbance, reducing $\bm d_t$ to the residual fluctuation $\bm r_j(w_{t+1}-w_t)$.
In the parametrization \eqref{eq:tv_ref}, this corresponds to
$\bm v^{\mathrm{bias}} = \bm b$ and
$\bm v^{\mathrm{amp}} = \tfrac{1}{2}\Delta \bm v_{\mathrm{phase}}$.

\begin{proposition}[Voltage reference switching]
\label{prop:mode_match_cancel}
Let the reference design follow \eqref{eq:ideal_tracking}
and $\Delta\bm v_{\mathrm{phase}} := \mathbf R\bigl(\bm{\bar p}(0)-\bm{\bar p}(1)\bigr)$.
If, for some bias vector $\bm b \in \mathbb{R}^{n}$, the two phase-dependent reference levels are chosen as
\vspace{-0.1cm}
\begin{equation}\label{eq:mode_match}
  \bm{\bar v}^{\mathrm{ref}}(0)
=
\bm b+\frac12\Delta\bm v_{\mathrm{phase}},
\quad \bm{\bar v}^{\mathrm{ref}}(1)
=
\bm b-\frac12\Delta\bm v_{\mathrm{phase}}, 
\end{equation}
then the disturbance $\bm d_t$ in~\eqref{eq:linear_error_system} reduces to
$
\bm d_t=\bm r_j(w_{t+1}-w_t).
$

\end{proposition}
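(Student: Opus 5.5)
The plan is a direct substitution into the definition $\bm d_t=\mathbf R\,\Delta\bm p_t-\Delta\bm v_t^{\mathrm{ref}}$. First I will write $\Delta\bm p_t$ explicitly. By the standing assumption, $p_{i,t}$ is constant for $i\neq j$, so $\Delta\bm p_t=\Delta p_{j,t}\,\bm e_j$. Hence $\mathbf R\,\Delta\bm p_t=\bm r_j\,\Delta p_{j,t}$. Using the phase model $p_{j,t}=\bar p_j(\phi_t)+w_t$, this becomes
\[
\mathbf R\,\Delta\bm p_t=\bm r_j\bigl(\bar p_j(\phi_{t+1})-\bar p_j(\phi_t)\bigr)+\bm r_j(w_{t+1}-w_t).
\]

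Next I will compute $\Delta\bm v_t^{\mathrm{ref}}$ from \eqref{eq:ideal_tracking} and \eqref{eq:mode_match}. I will write both reference levels in one formula, $\bar{\bm v}^{\mathrm{ref}}(\phi)=\bm b+\bigl(\tfrac12-\phi\bigr)\Delta\bm v_{\mathrm{phase}}$ for $\phi\in\{0,1\}$. Then
\[
\Delta\bm v_t^{\mathrm{ref}}=-(\phi_{t+1}-\phi_t)\,\Delta\bm v_{\mathrm{phase}}.
\]
In parallel, since $\bar p_j$ takes only two values, I can write $\bar p_j(\phi)=\bar p_j(0)-\phi\bigl(\bar p_j(0)-\bar p_j(1)\bigr)$. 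This gives
\[
\bm r_j\bigl(\bar p_j(\phi_{t+1})-\bar p_j(\phi_t)\bigr)=-(\phi_{t+1}-\phi_t)\,\bm r_j\bigl(\bar p_j(0)-\bar p_j(1)\bigr).
\]
Because only the data center load differs between phases, $\bm r_j\bigl(\bar p_j(0)-\bar p_j(1)\bigr)=\mathbf R\bigl(\bar{\bm p}(0)-\bar{\bm p}(1)\bigr)=\Delta\bm v_{\mathrm{phase}}$. So the structured part of $\mathbf R\Delta\bm p_t$ equals $\Delta\bm v_t^{\mathrm{ref}}$ exactly, and subtracting leaves $\bm d_t=\bm r_j(w_{t+1}-w_t)$. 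The bias $\bm b$ cancels in the difference and plays no role.

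To be safe, I will also check the three cases $\phi_{t+1}=\phi_t$, $0\to1$ and $1\to0$ directly rather than rely only on the linear-in-$\phi$ encoding. In the no-transition case both terms vanish. In the two transition cases the reference jump $\mp\Delta\bm v_{\mathrm{phase}}$ matches the load-induced jump $\mp\Delta\bm v_{\mathrm{phase}}$.

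The main subtlety is the identity $\bar{\bm p}(0)-\bar{\bm p}(1)=(\bar p_j(0)-\bar p_j(1))\bm e_j$. This relies on the modeling assumption that the other injections are constant, and on their slowly varying parts being absorbed into $\bar p_j(\cdot)$ and $w_t$. I will state this dependence explicitly. Beyond that, the proof is bookkeeping with the sign convention that phase $1$ (compute) gets the lower reference.
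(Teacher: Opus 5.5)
Your proposal is correct and follows essentially the same route as the paper. Both arguments write $\mathbf R\Delta\bm p_t=\bm r_j\bigl(\bar p_j(\phi_{t+1})-\bar p_j(\phi_t)+w_{t+1}-w_t\bigr)$, use $\Delta\bm v_{\mathrm{phase}}=\bm r_j\bigl(\bar p_j(0)-\bar p_j(1)\bigr)$ to show that the reference increment equals the phase-mean part, and cancel it; your linear-in-$\phi$ encoding and three-case check are just a more explicit version of that same cancellation.
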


\begin{proof}
Since $\Delta\bm p_t=\bm e_j\Delta p_{j,t}$, we have $\mathbf R\Delta\bm p_t=\bm r_j\bigl(\bar p_j(\phi_{t+1})-\bar p_j(\phi_t)+w_{t+1}-w_t\bigr)$. Under~\eqref{eq:ideal_tracking}, $\Delta\bm v_t^{\mathrm{ref}}=\bar{\bm v}^{\mathrm{ref}}(\phi_{t+1})-\bar{\bm v}^{\mathrm{ref}}(\phi_t)$, which by~\eqref{eq:mode_match} equals $\bm r_j\bigl(\bar p_j(\phi_{t+1})-\bar p_j(\phi_t)\bigr)$. Substituting both into $\bm d_t=\mathbf R\Delta\bm p_t-\Delta\bm v_t^{\mathrm{ref}}$, the phase-mean terms cancel and yield $\bm d_t=\bm r_j(w_{t+1}-w_t)$.
\end{proof}
Proposition~\ref{prop:mode_match_cancel} shows that
a proper choice of the references
can reduce the disturbance magnitude $|\bm d_t|$,
and consequently reduce the voltage deviation $|\bm v_t-\bm v_t^{\mathrm{ref}}|$.
Our ultimate objective, however, is to reduce the worst-case steady-state voltage deviation from 1 p.u.

In particular, under the reference design \eqref{eq:ideal_tracking}--\eqref{eq:mode_match},
the steady-state extrema $\bm v^+(\bm b)$ and $\bm v^-(\bm b)$
induced by the current bias $\bm b$ determine how to correct the bias to minimize the worst-case steady-state voltage deviation.

\begin{proposition}[Bias shifting]
\label{prop:minimax_deviation}

Under the reference design
\eqref{eq:ideal_tracking}--\eqref{eq:mode_match},
define for each phase $\phi\in\{0,1\}$ the steady-state voltage extrema
$\bm v^+(\phi,\bm b):=\limsup_{t\to\infty}\bm v_t$
and
$\bm v^-(\phi,\bm b):=\liminf_{t\to\infty}\bm v_t$ under fixed phase $\phi$.
The  extrema across two phases are given by
$\bm v^+(\bm b):=\max\{\bm v^+(0,\bm b),\,\bm v^+(1,\bm b)\}$
and
$\bm v^-(\bm b):=\min\{\bm v^-(0,\bm b),\,\bm v^-(1,\bm b)\}$.
Then the bias that minimizes the worst-case steady-state voltage deviation from  1 p.u. is
\vspace{-0.1cm}
\begin{equation}
\bm b^\star=\bm b+\Delta\bm b,\qquad \Delta\bm b=\bm 1-\frac{\bm v^+(\bm b)+\bm v^-(\bm b)}{2}.
\label{eq:b_star_from_extrema}
\end{equation}

\end{proposition}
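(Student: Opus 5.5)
The argument rests on one structural fact: under the reference design \eqref{eq:ideal_tracking}--\eqref{eq:mode_match}, changing the bias shifts every steady-state voltage trajectory by exactly the bias change. After that, the claim reduces to a scalar minimax problem at each bus.

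\textbf{Step 1: translation invariance of the closed loop in $\bm b$.} Fix a phase sequence, a noise sequence $w_t$, and an initial condition. Compare two runs, one with bias $\bm b$ and one with bias $\bm b'=\bm b+\bm\delta$. Choose the initial states so that $\bm v_{t_0}-\bm v^{\mathrm{ref}}_{t_0}$ is the same in both runs, which amounts to shifting $\bm q_{t_0}$ by $\mathbf X^{-1}\bm\delta$.

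\begin{itemize}
\item The references differ by the constant $\bm\delta$, so $\Delta\bm v_t^{\mathrm{ref}}$ is the same in both runs.
\item Hence the disturbance $\bm d_t$ is unchanged; by Proposition~\ref{prop:mode_match_cancel} it equals $\bm r_j(w_{t+1}-w_t)$ in both cases.
\item The error recursion \eqref{eq:linear_error_system} is linear and driven only by $\bm d_t$. It therefore produces identical error trajectories $\bm e_t:=\bm v_t-\bm v_t^{\mathrm{ref}}$.
\item Consequently $\bm v'_t=\bm v_t+\bm\delta$ for all $t$.
\end{itemize}

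If the initial conditions are not matched, Theorem~\ref{thm:deviation_bound} shows that the initial-condition term decays as $\epsilon^{t-t_0}$. It therefore does not affect $\limsup$ or $\liminf$, and the conclusion holds asymptotically for any initialization. Taking limits gives, for each phase $\phi$,
\[
\bm v^{\pm}(\phi,\bm b+\bm\delta)=\bm v^{\pm}(\phi,\bm b)+\bm\delta .
\]
Taking the max and min over the two phases, which are elementwise operations, then yields
\[
\bm v^{\pm}(\bm b+\bm\delta)=\bm v^{\pm}(\bm b)+\bm\delta .
\]

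\textbf{Step 2: the scalar minimax problem.} The worst-case steady-state deviation from $1$ at bus $i$ under bias $\bm b+\bm\delta$ is
\[
J_i(\delta_i)=\max\{\,v_i^+(\bm b)+\delta_i-1,\;1-v_i^-(\bm b)-\delta_i\,\}.
\]
By Step~1 this depends only on $\delta_i$, so the elementwise minimization decouples across buses.

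\begin{itemize}
\item The first argument of the max is increasing in $\delta_i$ and the second is decreasing.
\item Their maximum is therefore minimized where they are equal, that is, at $\delta_i=1-\tfrac12\bigl(v_i^+(\bm b)+v_i^-(\bm b)\bigr)$.
\item The minimal value is $\tfrac12\bigl(v_i^+(\bm b)-v_i^-(\bm b)\bigr)$.
\item For any other $\delta_i$, one of the two terms is strictly larger than this value, so the minimizer is unique.
\end{itemize}

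This gives exactly $\Delta\bm b$ in \eqref{eq:b_star_from_extrema}.

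\textbf{Main obstacle.} The delicate step is justifying the translation invariance rigorously. Two points need care.

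\begin{itemize}
\item \emph{Existence and finiteness of the extrema.} I would invoke \eqref{eq:deviation_bound_general} with a bound $\bar{\bm d}$ coming from bounded $w_t$. This shows $\bm e_t$ is eventually bounded, so the $\limsup$ and $\liminf$ exist and are finite.
\item \emph{Meaning of "under fixed phase $\phi$".} I would interpret this as taking the $\limsup$ and $\liminf$ over the times $t$ at which $\phi_t=\phi$. Restricting to a subsequence of times commutes with adding the constant $\bm\delta$, so the shift identity survives the restriction.
\end{itemize}

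A further point to state explicitly is that the implemented reference, and hence the error dynamics, must not depend on $\bm b$ in any other way. I would assume this, since the phase detection is taken as ideal in \eqref{eq:ideal_tracking}.
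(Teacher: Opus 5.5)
Your proposal is correct and follows essentially the same route as the paper. The paper also argues that shifting the bias translates both reference levels without changing $\Delta\bm v_t^{\mathrm{ref}}$ or the error dynamics, so the steady-state extrema shift by $\Delta\bm b$; it then chooses the shift that places $1$ at the midpoint of $[v_i^-(\bm b)+\Delta b_i,\,v_i^+(\bm b)+\Delta b_i]$. Your extra care tightens details the paper leaves implicit without changing the argument: matching initial conditions by offsetting $\bm q_{t_0}$ by $\mathbf X^{-1}$ times the bias change (or letting the transient decay), finiteness of the extrema, and uniqueness of the minimizer.
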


\begin{proof}
At each bus, $v_i$ lies in $[v_i^-(\bm b),\,v_i^+(\bm b)]$, and the maximum distance from $1$ is attained at an endpoint. Thus, for each $i$, the worst-case steady-state voltage deviation is $\limsup_{t\to\infty}|v_{i,t}-1|=\max\{|v_i^+(\bm b)-1|,\;|v_i^-(\bm b)-1|\}$.

Now shift the bias from $\bm b$ to $\bm b+\Delta\bm b$. Since $\bm v^{\mathrm{ref}}(0)=\bm b+\tfrac12\Delta\bm v_{\mathrm{phase}}$ and $\bm v^{\mathrm{ref}}(1)=\bm b-\tfrac12\Delta\bm v_{\mathrm{phase}}$, both reference levels are translated by $\Delta\bm b$ while their difference remains unchanged. Hence $\Delta\bm v_t^{\mathrm{ref}}$ and the error dynamics in~\eqref{eq:linear_error_system} are unchanged, so the steady-state extrema shift by the same amount, $\bm v^+(\bm b+\Delta\bm b)=\bm v^+(\bm b)+\Delta\bm b$ and $\bm v^-(\bm b+\Delta\bm b)=\bm v^-(\bm b)+\Delta\bm b$. The worst-case steady-state voltage deviation from 1 p.u. then becomes $\max\{|\bm v^+(\bm b)+\Delta\bm b-\bm 1|,\;|\bm v^-(\bm b)+\Delta\bm b-\bm 1|\}$.
For each bus $i$, this is the maximum distance from $1$ to the interval $[v_i^-(\bm b)+\Delta b_i,\;v_i^+(\bm b)+\Delta b_i]$. Such distance  is minimized when $1$ is its midpoint, namely, $\tfrac{1}{2}(v_i^-(\bm b)+\Delta b_i+v_i^+(\bm b)+\Delta b_i)=1$, which yields~\eqref{eq:b_star_from_extrema}.
\end{proof}

Equation~\eqref{eq:b_star_from_extrema} gives the same $\bm b^\star$ for any $\bm b$, provided each phase lasts long enough for the voltage to settle to its steady-state extrema $\bm v^+$ and $\bm v^-$ (the $t\to\infty$ limits above). The correction $\Delta\bm b$ depends on the chosen $\bm b$, but $\bm b^\star$ does not.

\vspace{-0.2cm}
\subsection{Extension to Deadband and Saturation}

The analysis above works for the linear controller of~\eqref{eq:droop}. In practice, the controller additionally incorporates a per-bus voltage deadband and an action saturation, consistent with IEEE~1547~\cite{ieee1547}. We denote these operators by
\vspace{-0.2cm}
\begin{equation}
\begin{aligned}
\mathrm{dz}_{\bm\delta}(\bm x)_i &:= \mathrm{sign}(x_i)\max\{|x_i|-\delta_i,\,0\}, \\
\mathrm{sat}_{\bm\Delta}(\bm x)_i &:= \mathrm{sign}(x_i)\min\{|x_i|,\,\Delta_i\},
\end{aligned}
\label{eq:dz_sat_def}
\end{equation}
where $\bm\delta\in\mathbb R^n_{\ge 0}$ is the per-bus deadband and $\bm\Delta\in\mathbb R^n_{>0}$ is the action saturation. The implemented controller applies these operators to the linear update~\eqref{eq:droop}:
\vspace{-0.1cm}
\begin{equation}
\bm q_{t+1} = \bm q_t - \mathrm{sat}_{\bm\Delta}\!\bigl(\mathbf K\,\mathrm{dz}_{\bm\delta}(\bm v_t-\bm v_t^{\mathrm{ref}})\bigr).
\label{eq:db_sat_controller}
\end{equation}

These operators are nonsmooth and fall outside the linear analysis of Theorem~\ref{thm:deviation_bound}. The next result shows that the deviation bound remains finite and monotone in $\bar{\bm d}$ under both operators.

\begin{theorem}[Controller with deadband and saturation]
\label{thm:implemented_bound}
Let $(\mathbf C,\epsilon)$ be defined as in Theorem~\ref{thm:deviation_bound}. Suppose $0\prec\mathbf K\prec 2\mathbf X^{-1}$ and $|\bm d_t|\le\bar{\bm d}$ for all $t$, and assume the small-gain condition
\vspace{-0.2cm}
\begin{equation}
\rho(\mathbf C\mathbf X\mathbf K)<1-\epsilon.
\label{eq:small_gain}
\end{equation}
Then the asymptotic voltage deviation $\bar{\bm e}:=\limsup_{t\to\infty}|\bm v_t-\bm v_t^{\mathrm{ref}}|$ of the implemented controller satisfies
\begin{equation}
\begin{aligned}
\bar{\bm e}\;\le\;&\biggl(\mathbf I-\frac{\mathbf C\mathbf X\bm\Pi_S\mathbf K}{1-\epsilon}\biggr)^{\!-1}\\
&\cdot\biggl[\underbrace{\frac{\mathbf C\bar{\bm d}}{1-\epsilon}}_{\text{Theorem~\ref{thm:deviation_bound}}}+\underbrace{\frac{\mathbf C\mathbf X\mathbf K\bm\delta}{1-\epsilon}}_{\text{deadband}}-\underbrace{\frac{\mathbf C\mathbf X\bm\Pi_S(\mathbf K\bm\delta+\bm\Delta)}{1-\epsilon}}_{\text{saturation}}\biggr],
\end{aligned}
\label{eq:implemented_bound}
\end{equation}
where $S:=\{\,i:(\mathbf K(\bar{\bm e}-\bm\delta)-\bm\Delta)_i>0\,\}$ is the set of buses at which the action saturation remains asymptotically active and $\bm\Pi_S$ is the diagonal $0$/$1$ matrix with $(\bm\Pi_S)_{ii}=1$ iff $i\in S$.
\end{theorem}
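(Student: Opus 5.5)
The plan is to write the implemented controller as the linear controller plus a bounded nonlinear residual, and then apply the linear bound of Theorem~\ref{thm:deviation_bound} with that residual treated as an extra disturbance. Let $\bm e_t:=\bm v_t-\bm v_t^{\mathrm{ref}}$ and $\bm u_t:=\mathrm{sat}_{\bm\Delta}(\mathbf K\,\mathrm{dz}_{\bm\delta}(\bm e_t))$. Substituting \eqref{eq:db_sat_controller} into \eqref{eq:lin_voltage}, exactly as in the derivation of \eqref{eq:linear_error_system}, gives
\[
\bm e_{t+1}=(\mathbf I-\mathbf X\mathbf K)\bm e_t+\bm d_t+\mathbf X\bigl(\mathbf K\bm e_t-\bm u_t\bigr).
\]
The residual $\bm\eta_t:=\mathbf K\bm e_t-\bm u_t$ is elementwise the gap between the linear action and the implemented one. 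Since $\mathbf X\ge 0$ elementwise, Theorem~\ref{thm:deviation_bound} then yields
\[
|\bm e_t|\le\mathbf C\epsilon^{t-t_0}|\bm e_{t_0}|+\sum_{\tau=t_0}^{t-1}\mathbf C\epsilon^{t-1-\tau}\bigl(|\bm d_\tau|+\mathbf X|\bm\eta_\tau|\bigr).
\]

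Next I would bound $|\bm\eta_t|$ per bus by a case split on the operating region. Inside the deadband, $|\eta_i|=k_i|e_i|\le k_i\delta_i$. In the linear region, $|\eta_i|=k_i\delta_i$. In the saturated region, $|\eta_i|=k_i|e_i|-\Delta_i$. In every region $|\eta_i|\le k_i\delta_i$ when the bus is not saturated, and $|\eta_i|= k_i|e_i|-\Delta_i\le k_i|e_i|-(k_i\delta_i+\Delta_i)+k_i\delta_i$ when it is saturated. Together this gives
\[
|\bm\eta_t|\le \mathbf K\bm\delta+\bm\Pi_{S_t}\bigl(\mathbf K|\bm e_t|-\mathbf K\bm\delta-\bm\Delta\bigr),
\]
where $S_t$ is the saturated set at time $t$. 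Before passing to the limit, I would show boundedness of $\bm e_t$ using the crude bound $|\bm\eta_t|\le\mathbf K|\bm e_t|$. The small-gain condition \eqref{eq:small_gain} gives $\rho(\mathbf C\mathbf X\mathbf K/(1-\epsilon))<1$, so the comparison sequence is bounded by a standard M-matrix argument, and $\bar{\bm e}$ is finite.

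Then I would take $\limsup$ in the convolution bound. The geometric kernel sums to $1/(1-\epsilon)$, and limsup passes through it with nonnegative coefficients, giving
\[
\bar{\bm e}\le\frac{\mathbf C}{1-\epsilon}\Bigl[\bar{\bm d}+\mathbf X\mathbf K\bm\delta+\mathbf X\bm\Pi_S(\mathbf K\bar{\bm e}-\mathbf K\bm\delta-\bm\Delta)\Bigr].
\]
Here I identify the asymptotically saturated set with $S$ as defined via $\bar{\bm e}$. Moving the $\bar{\bm e}$ term to the left gives $(\mathbf I-\mathbf C\mathbf X\bm\Pi_S\mathbf K/(1-\epsilon))\bar{\bm e}\le[\cdots]$. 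Because $0\le\mathbf C\mathbf X\bm\Pi_S\mathbf K\le\mathbf C\mathbf X\mathbf K$ elementwise, monotonicity of the spectral radius together with \eqref{eq:small_gain} makes this matrix a nonsingular M-matrix. Its inverse is therefore nonnegative, and multiplying through preserves the inequality, which gives \eqref{eq:implemented_bound}.

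The main obstacle is the limsup step involving the time-varying saturated set $S_t$. The term $\bm\Pi_{S_t}(\mathbf K|\bm e_t|-\mathbf K\bm\delta-\bm\Delta)$ equals $\max\{\mathbf K|\bm e_t|-\mathbf K\bm\delta-\bm\Delta,0\}$ elementwise. Its limsup is at most $\max\{\mathbf K\bar{\bm e}-\mathbf K\bm\delta-\bm\Delta,0\}=\bm\Pi_S(\mathbf K\bar{\bm e}-\mathbf K\bm\delta-\bm\Delta)$, because the map is monotone and continuous. Writing it this way removes the dependence on $S_t$ and justifies using the limiting set $S$.
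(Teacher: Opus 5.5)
Your proposal is correct and follows essentially the paper's argument. The paper splits your combined residual $\mathbf K\bm e_t-\bm u_t$ into a deadband part $\mathbf K\bm\sigma_t$ with $|\bm\sigma_t|\le\bm\delta$ and a saturation part with $|\bm\eta_t|=\max(\bm 0,\mathbf K(|\bm e_t|-\bm\delta)-\bm\Delta)$, which add up to exactly your bound; it then takes the $\limsup$ by the same monotonicity-and-continuity argument and inverts the same nonsingular M-matrix. Your a priori boundedness step, using $|\bm\eta_t|\le\mathbf K|\bm e_t|$ together with the small-gain condition, is a genuine improvement: the paper rearranges the implicit inequality without first checking that $\bar{\bm e}$ is finite, and your step supplies that check.
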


The proof is given in Appendix~\ref{app:convergence}.
The bound is nonnegative and nondecreasing in $\bar{\bm d}$. By this monotonicity, the reference design of Section~\ref{sec:grid-side} continues to lower the voltage deviation under the implemented controller. The bound reduces to Theorem~\ref{thm:deviation_bound} when both operators are inactive and to Corollaries~\ref{cor:deadband} and~\ref{cor:saturation} in the deadband-only and saturation-only limits.

\section{Decentralized Implementation}
\label{sec:implementation}

\subsection{Measurement-Based Reference Adaptation}

The reference design of Section~\ref{sec:grid-side} requires the data center phase $\phi_t$, the phase-induced voltage shift $\Delta\bm v_{\mathrm{phase}}$, and the optimal bias $\bm b^\star$, none of which are directly available to the controller. Fortunately, both the phase-induced voltage shift $\Delta\bm v_{\mathrm{phase}}$ and the voltage extrema needed for bias correction can be approximated from recent voltage measurements. In particular, only local voltage observations are required for updating $v_i^{\mathrm{bias}}$ and $v_i^{\mathrm{amp}}$ for each bus $i$.

\begin{algorithm}[!t]
\caption{Decentralized switching-reference controller.}
\label{alg:controller}

\noindent\textbf{Parameters.}
\emph{Per bus:} droop gain $k_i > 0$, deadband $\delta_i \ge 0$, and action saturation $\Delta_i > 0$.
\emph{Shared:} amplitude and bias window lengths $N_a, N_b$, bias update period $D_b$, and smoothing gain $\eta_b \in (0, 1]$.

\noindent\textbf{Initialize.} For every bus $i$, set $v_i^{\mathrm{bias}} \gets 1$~p.u.; let $\mathcal{W}_i$ hold the most recent $N = \max(N_a, N_b)$ voltage samples, initially~empty.

\noindent\textbf{Procedure.} Each bus $i$ locally executes the following at every control instant $t = 1, 2, \ldots$:

\begin{algorithmic}[1]
\State Measure local voltage $v_{i,t}$
\State Push $v_{i,t}$ into $\mathcal{W}_i$; drop the oldest if $|\mathcal{W}_i| > N$
\State $v_i^{\mathrm{amp}} \gets \tfrac{1}{2}\max_{0\le\ell<N_a-1}|v_{i,t-\ell}-v_{i,t-\ell-1}|$
\State $s_i \gets \mathrm{sign}(v_{i,t}-v_i^{\mathrm{bias}})$
\State $v_i^{\mathrm{ref}} \gets v_i^{\mathrm{bias}} + s_i\,v_i^{\mathrm{amp}}$
\State $q_i \gets q_i - \mathrm{sat}_{\Delta_i}\!\bigl(k_i\,\mathrm{dz}_{\delta_i}(v_{i,t}-v_i^{\mathrm{ref}})\bigr)$
\If{$t \bmod D_b = 0$}
  \State $\Delta b \gets \tfrac{1}{2}\bigl(\max_{0\le\ell<N_b} v_{i,t-\ell} + \min_{0\le\ell<N_b} v_{i,t-\ell}\bigr) - 1$
  \State $v_i^{\mathrm{bias}} \gets v_i^{\mathrm{bias}} - \eta_b\,\Delta b$
\EndIf
\end{algorithmic}
\end{algorithm}

We therefore maintain a window $\mathcal{W}_i$ of the most recent $N = \max(N_a, N_b)$ voltage samples; the amplitude and bias updates use the most recent $N_a$ and $N_b$ of them, respectively. Specifically, for each bus $i$, we use the reference structure characterized in Propositions~\ref{prop:mode_match_cancel} and~\ref{prop:minimax_deviation}
to construct the time-varying voltage reference
\[
v_{i,t}^{\mathrm{ref}}
=
v_{i,t}^{\mathrm{bias}}
+
s_{i,t} v_{i,t}^{\mathrm{amp}},
\]
where $v_{i,t}^{\mathrm{bias}}$, $s_{i,t}$, and $v_{i,t}^{\mathrm{amp}}$ serve as
estimates of the bias $b_i$, phase $\phi_t$, and
$\tfrac12\Delta v_{\mathrm{phase},i}$, respectively.

\subsubsection{Amplitude Update}

Guided by Proposition~\ref{prop:mode_match_cancel},
we set $v_{i,t}^{\mathrm{amp}}$ to $\tfrac12\Delta v_{\mathrm{phase},i}$,
that is, half of the voltage change induced by a phase transition.
We therefore adapt this value online
using the largest per-step voltage change in the past window. 
Let $N_a$ denote the window length used for amplitude estimation.
If $N_a$ is sufficiently long to include a phase transition,
the most recent $N_a$ samples contain at least one recent transition,
and the largest one-step voltage change within this window
\(
\max_{0\le \ell <N_a-1}|v_{i,t-\ell}-v_{i,t-\ell-1}|
\)
provides an estimate of the phase-induced voltage shift $\Delta v_{\mathrm{phase},i}$.
The amplitude is therefore updated as
\begin{equation}
v_{i,t}^{\mathrm{amp}}
=
\tfrac12
\max_{0\le \ell <N_a-1}
\bigl|
v_{i,t-\ell}-v_{i,t-\ell-1}
\bigr|.
\label{eq:amp_update}
\end{equation}

\subsubsection{Bias Update}

Proposition~\ref{prop:minimax_deviation} shows that
the optimal bias can be determined
from the steady-state voltage extrema using~\eqref{eq:b_star_from_extrema}.
Since these extrema are not directly observable,
we approximate them from recent voltage measurements
over a horizon of $N_b$ samples.
If $N_b$ is sufficiently long relative to the typical interval between phase transitions,
the most recent $N_b$ samples capture the voltage range produced by the two phases.
The midpoint of this range therefore
provides an estimate of the bias correction term $\Delta \bm{b}$ in Proposition~\ref{prop:minimax_deviation}
\[
\Delta b_{i,t}
=
\tfrac12
\Bigl(
\max_{0\le\ell<N_b} v_{i,t-\ell}
+
\min_{0\le\ell<N_b} v_{i,t-\ell}
\Bigr)
-1.
\]
The bias is then updated incrementally as
\begin{equation}
v_{i,t+1}^{\mathrm{bias}}
=
v_{i,t}^{\mathrm{bias}}
-
\eta_b\, \Delta b_{i,t},
\label{eq:bias_update}
\end{equation}
where $\eta_b\in(0,1]$ is a smoothing gain.
Equation~\eqref{eq:b_star_from_extrema} corresponds to a full correction, i.e., $\eta_b=1$.
Here we instead use a smaller gain to reduce the influence of outdated extrema
that may remain in the finite window until replaced by newer samples.
A smaller $\eta_b$ helps prevent the bias update from overreacting to such stale extrema.

\subsubsection{Sign Selection}

The sign $s_{i,t}\in\{-1,+1\}$ acts as a local estimate of the unknown
data center phase $\phi_t$, enabling a decentralized approximation of
\eqref{eq:ideal_tracking} using only local voltage measurements.
We choose:
\begin{equation}
s_{i,t}
=
\begin{cases}
+1, & v_{i,t} > v_{i,t}^{\mathrm{bias}},\\
-1, & \text{otherwise}.
\end{cases}
\label{eq:sign_update}
\end{equation}

Because the bias approximates the midpoint between the two reference levels
and intra-phase fluctuations are typically much smaller than the level separation,
the voltage remains mostly above or below the bias within each phase,
providing a simple decentralized rule for reference selection.

\vspace{-0.3cm}
\subsection{The Decentralized Algorithm}
\label{sec:decentralized_algorithm}

The complete decentralized control algorithm, which combines the measurement-based reference adaptation above with the deadband and saturation operators of~\eqref{eq:dz_sat_def}, is summarized in Algorithm~\ref{alg:controller}: each bus $i$ maintains a window $\mathcal{W}_i$ of recent local voltage samples, from which it estimates the amplitude $v_i^{\mathrm{amp}}$, the sign $s_i$, and the bias $v_i^{\mathrm{bias}}$ via~\eqref{eq:amp_update}--\eqref{eq:sign_update}, and updates its reactive power injection $q_i$ accordingly.

Algorithm~\ref{alg:controller} does not depend on a particular workload, model scale, or cycle period.
Different AI training jobs or internal data center controls may lead to different phase durations and transition magnitudes. Since these changing power patterns are reflected in the most recent voltage measurements, the measurement-based implementation enables the controller to adapt effectively to variations in the power profile. As will be demonstrated in the case study, the proposed controller can accommodate disturbances from multiple data centers, as well as changes in the data center load profile induced by internal power-smoothing control.

We design hyperparameters that generalize across workloads with different cycle periods, without requiring separate manual retuning.
Specifically, we scale the estimation windows $N_a$ and $N_b$ proportionally to $T_{\mathrm{cycle}}$ so that they span a full cycle. In addition, we scale the bias-update rate inversely with $T_{\mathrm{cycle}}$ so that the bias refreshes a fixed number of times per cycle, while the smoothing gain $\eta_b$ is held fixed. The amplitude window $N_a$ then captures at least one phase transition, and the bias window $N_b$ captures the voltage range produced by both phases. A small $\eta_b$ reduces the influence of any single window measurement that may be stale. With this cycle-relative scaling, the same dimensionless parameter set applies across workloads with different $T_{\mathrm{cycle}}$. The implementation is also lightweight: each bus maintains an $O(N)$-sample rolling history with $N=\max(N_a,N_b)$ and uses only local arithmetic for the amplitude, sign, and bias updates. Moreover, no inter-bus communication is required, so the per-bus cost is independent of network size.

\begin{remark}[Compatibility with conventional and smoothed loads]
The controller needs no prior knowledge of whether a bus hosts a data center. When a bus has no abrupt transitions, as with conventional slowly varying load or a fully smoothed data center, $v_{i,t}^{\mathrm{amp}}$ falls to zero. The two reference levels $v_{i,t}^{\mathrm{bias}}\pm v_{i,t}^{\mathrm{amp}}$ then collapse into one, and the controller reduces to the standard fixed-reference droop baseline~\eqref{eq:droop_fixed}. When a data center smooths its load only partially, $v_{i,t}^{\mathrm{amp}}$ shrinks instead, and the reference switches by a correspondingly smaller amount. The same controller can therefore be applied, without retuning, to buses with no data center as well as to buses hosting data centers that perform load smoothing.
\end{remark}

\vspace{-0.3cm}
\section{Case Studies}

We evaluate the proposed controller on the IEEE 33-bus distribution feeder~\cite{baran1989network}. We conduct case studies that cover a set of real AI training power traces. We also demonstrate the performance of the proposed method under scenarios with multiple data centers and internal load smoothing.

The per-unit bases for power and voltage are $10~\mathrm{MVA}$ and $12.66~\mathrm{kV}$, respectively. The data center is placed at bus~22. The nominal bus loads follow the standard IEEE 33-bus data. The normalized GPU power trace is scaled so that its swing induces a $\pm4.5\%$ voltage deviation, just below the $\pm5\%$ limit. Voltage and active power injections update at $\Delta t_{\mathrm{sim}} = 0.1~\mathrm{s}$; reactive power injections are held constant within a slower control update period $\Delta t_{\mathrm{ctrl}} = 1.0~\mathrm{s}$, consistent with practical inverter-based control~\cite{yuan2017multi}. Bus voltages are simulated using the LinDistFlow model~\cite{baran1989network, zhu2015fast}. We compare the proposed switching-reference method against the fixed-reference droop controller in~\eqref{eq:droop_fixed}, which uses a constant reference of $1~\mathrm{p.u.}$ Both methods share the same linear droop gain $\mathbf{K}$, a symmetric deadband $\delta_i = 0.02~\mathrm{p.u.}$, and an action saturation $\Delta_i = 0.01~\mathrm{p.u.}$, consistent with IEEE~1547~\cite{ieee1547}.

We report two metrics throughout this section, both computed after a warmup of 6 workload cycles: the \emph{voltage violation} is the root-mean-square (RMS) and the peak of the band excess $\max\{|v_{i,t}-1|-0.05,\,0\}$ over all buses and time steps, and the \emph{control effort} is the analogous root-mean-square of the reactive power increment $\Delta q_{i,t}$. The code reproducing these experiments is publicly available at \url{https://github.com/yan-mingyuan/switchref}.

\subsection{Performance Across AI Training Workloads}
\label{sec:perf_across_workloads}

\begin{figure}[t]
\centering
\includegraphics[width=0.95\linewidth]{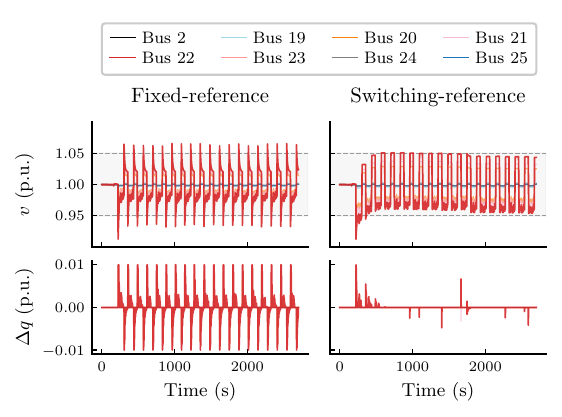}
\vspace{-0.5cm}
\caption{
Single data center scenario: voltage trajectory $\bm v$ (top) and control action $\Delta\bm q$ (bottom) under fixed-reference (left) and switching-reference (right) control. The proposed control substantially reduces control effort while maintaining voltage within the admissible $\pm5\%$ voltage-deviation band.
\vspace{-0.5cm}
}
\label{fig:exp_single}
\end{figure}

\begin{figure*}[t]
\centering
\includegraphics[width=0.55\textwidth]{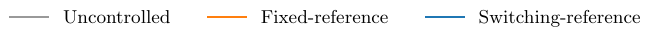}\\[-0.2cm]
\subfloat[4$\times$H200\vspace{-0.25cm}]{\includegraphics[width=0.95\textwidth]{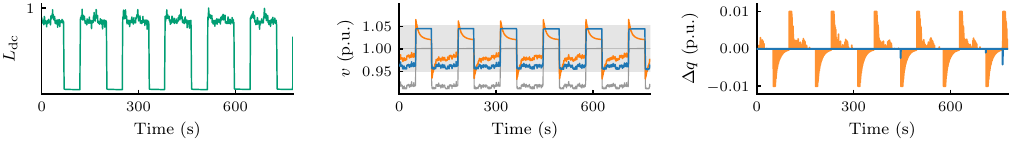}}\\
\vspace{-0.1cm}
\subfloat[4$\times$A100\vspace{-0.25cm}]{\includegraphics[width=0.95\textwidth]{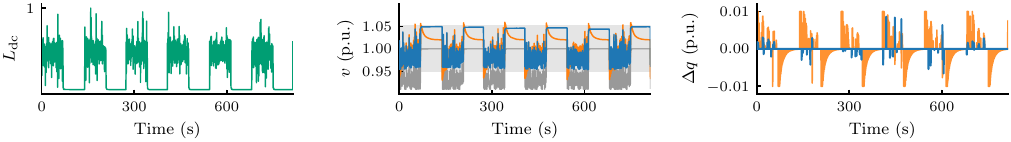}}\\
\vspace{-0.1cm}
\subfloat[4$\times$L40S\vspace{-0.25cm}]{\includegraphics[width=0.95\textwidth]{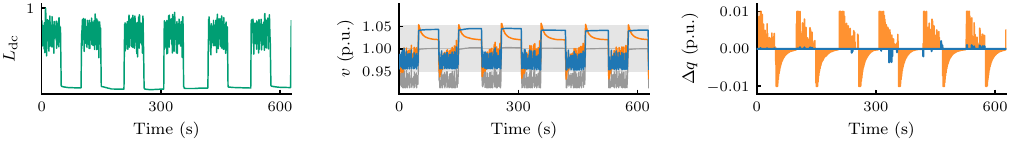}}\\
\vspace{-0.1cm}
\subfloat[At-scale H100\vspace{-0.2cm}]{\includegraphics[width=0.95\textwidth]{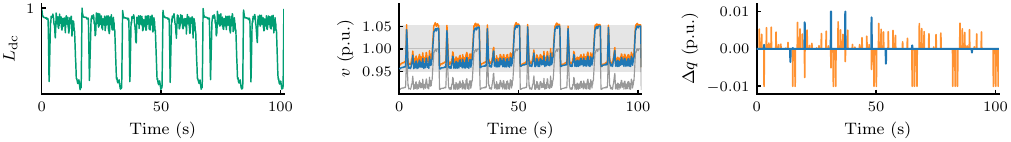}}
\caption{\small Generalization across AI training power traces: the normalized data center load $L_{\mathrm{dc}}$ (left); the data center bus voltage $v$ under no control, fixed-reference, and switching-reference control (middle); the control $\Delta q$ under fixed- and switching-reference control (right). The proposed switching-reference control reduces voltage violation and control effort consistently across diverse training workloads.\vspace{-0.5cm}}
\label{fig:generalization}
\end{figure*}

We evaluate the controller on real AI training power traces: 4-GPU node aggregates from $4\times$H200, $4\times$A100, and $4\times$L40S training jobs on a high-performance computing cluster, and the at-scale H100 trace of Fig.~\ref{fig:train_power} reported in~\cite{choukse2025power}. These traces share the synchronized compute/communication structure of large-model training, in which accelerators across the cluster execute the same iteration step in lockstep, and therefore share this two-phase structure at data center scale~\cite{choukse2025power}. While the specific model and training configuration vary across traces, the underlying two-phase pattern is preserved across all of them; only the cycle period differs.

Fig.~\ref{fig:exp_single} compares voltages and control actions across selected buses in the distribution system under both fixed-reference and switching-reference control, where the data center power traces are recorded during distributed training of the LLaMA-2-70B model~\cite{touvron2023llama} with a QLoRA configuration~\cite{dettmers2023qlora} on the 4$\times$H200 GPUs. Under fixed-reference control, actions repeatedly reach the saturation limit at each phase transition, and the voltage shows persistent oscillations synchronized with the load. The proposed switching-reference control substantially reduces both the magnitude and persistence of the actions after a few initial cycles, keeping the voltage close to nominal and rarely outside the admissible band.

\begin{table}[!t]
\caption{Performance across training traces for a single data center.}
\label{tab:metrics}
\centering
\setlength{\tabcolsep}{3pt}
\begin{tabular}{llccc}
\toprule
\multirow{2}{*}{Trace} & \multirow{2}{*}{Controller}
    & \multicolumn{2}{c}{Voltage violation (p.u.)} & Control effort (p.u.) \\
\cmidrule(lr){3-4}
& & RMS ($\times 10^{-4}$) & Peak ($\times 10^{-2}$) & RMS ($\times 10^{-4}$) \\
\midrule
\multirow{3}{*}{4$\times$H200}  & Uncontrolled & 50.53 & 3.91 & -- \\
                                & Fixed & 3.92 & 1.78 & 3.02 \\
                                & \textbf{Switching} & \textbf{0.23} & \textbf{0.10} & \textbf{0.15} \\
\midrule
\multirow{3}{*}{4$\times$A100}  & Uncontrolled & 34.07 & 3.99 & -- \\
                                & Fixed & 3.08 & 1.99 & 2.83 \\
                                & \textbf{Switching} & \textbf{0.13} & \textbf{0.04} & \textbf{0.45} \\
\midrule
\multirow{3}{*}{4$\times$L40S}  & Uncontrolled & 30.44 & 3.67 & -- \\
                                & Fixed & 2.40 & 1.80 & 2.98 \\
                                & \textbf{Switching} & $\bm{<\!0.01}$ & $\bm{<\!0.01}$ & \textbf{0.23} \\
\midrule
\multirow{3}{*}{\shortstack[l]{At-scale\\H100~\cite{choukse2025power}}} & Uncontrolled & 54.07 & 4.00 & -- \\
                                & Fixed & 3.42 & 1.09 & 3.38 \\
                                & \textbf{Switching} & \textbf{0.25} & \textbf{0.37} & \textbf{1.00} \\
\bottomrule
\end{tabular}
\vspace{-0.5cm}
\end{table}

Fig.~\ref{fig:generalization} shows the power traces across different training hardware and compares the corresponding voltages and control actions under fixed-reference and switching-reference control. Table~\ref{tab:metrics} quantifies this comparison in terms of voltage violation and control effort. Without reactive control, the load swing alone drives the RMS voltage violation to $30$--$54\times 10^{-4}$~p.u., well above either controller, confirming that voltage support is required. The switching reference suppresses voltage violations, sometimes eliminating them entirely, and reduces the control effort by an order of magnitude on average. The improvement is consistent across heterogeneous training hardware and cycle periods. For the at-scale AI training trace reported in~\cite{choukse2025power}, the switching reference again reduces both the RMS voltage violation and the control effort. The controller therefore generalizes across AI training workloads.

\subsection{Multiple Data Centers}
\label{sec:multi_dc}
A single distribution feeder may serve more than one large data center.
To evaluate this scenario, we consider two data centers on different feeder branches (buses~22 and~25), each running an independent training workload, and apply a decentralized switching-reference controller at each site with no coordination between the two.

Fig.~\ref{fig:exp_two} shows that the controller continues to suppress voltage deviations under the two simultaneously varying loads. As in the case with single data centers, the switching reference substantially reduces both the voltage violation, from $6.47$ to $0.88\times 10^{-4}$~p.u., and the control effort, from $4.81$ to $0.25\times 10^{-4}$~p.u. Each controller only leverages the local voltage measurements at its own bus. The decentralized design therefore remains effective with multiple data centers, requiring no communication between sites or central coordination.

\begin{figure}[t]
\centering
\includegraphics[width=\linewidth]{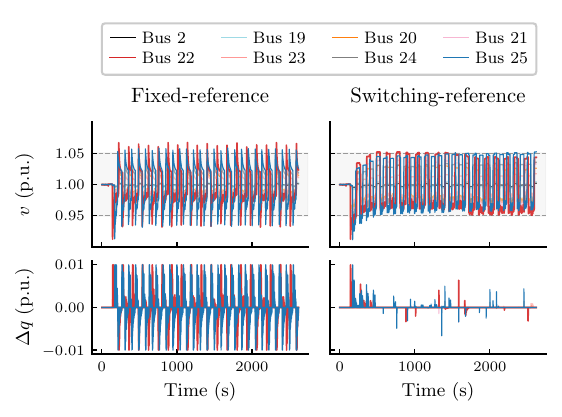}
\vspace{-0.5cm}
\caption{
Two data center scenario: voltage trajectory $\bm v$ (top) and control action $\Delta\bm q$ (bottom) under fixed-reference (left) and switching-reference (right) control. The proposed control remains effective with multiple data centers.
\vspace{-0.8cm}
}
\label{fig:exp_two}
\end{figure}

\vspace{-0.3cm}
\subsection{Internal Load Smoothing}
\label{sec:load_smoothing}
Data centers could conduct internal power smoothing using energy storage devices~\cite{choukse2025power, shi2017using}. Because such smoothing reshapes the load profile and may switch on during operation, a practical grid-side controller should remain effective across the resulting regime change without retuning. We therefore consider a case study in which the data center has energy storage sized to supply the maximum data center power for $60$~s (a typical backup-generator startup time). The net active power injection becomes $\hat p_{j,t}=p_{j,t}-z_{j,t}$, where $z_{j,t}$ follows a droop rule around a 100-s rolling average $\bar p_{j,t}$: $z_{j,t}=k_p(p_{j,t}-\bar p_{j,t})$ with droop gain $k_p=0.6$. The state of charge is constrained to $[0.93,0.97]$ around $95\%$ to preserve backup capability; the dispatch of energy storage is forced to zero at the bounds.

Fig.~\ref{fig:exp3_compare} shows the data center load profile, control actions, and voltage trajectory before and after load smoothing activates at $t\approx 1400$~s. After activation of load smoothing, the magnitude of the load fluctuations and the corresponding voltage deviations are smaller. The controller adapts to the modified load profile, producing only sparse, small actions. In particular, through the bias and amplitude update rules of Section~\ref{sec:decentralized_algorithm}, the controller adaptively recenters the voltage around $1$~p.u. This recentering reduces the maximum voltage deviation and improves the worst-case voltage regulation. The controller adapts across the regime change without retuning, demonstrating compatibility with data center load smoothing.

\begin{figure}[t]
\centering
\includegraphics[width=0.9\linewidth]{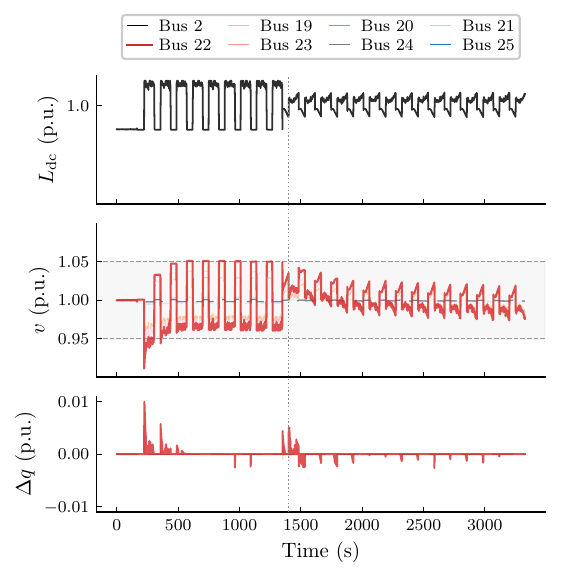}
\vspace{-0.2cm}
\caption{Compatibility with internal data center load smoothing.
Top to bottom: data center active power $\bm p$, voltage trajectory $\bm v$, and control action $\Delta\bm q$ under the proposed switching-reference controller.
Internal storage-based smoothing activates at $t\approx1400$~s. Switching-reference control adapts across the regime change without retuning, keeping the control actions sparse and small.
\vspace{-0.5cm}
}
\label{fig:exp3_compare}
\end{figure}

\vspace{-0.3cm}
\section{Conclusion}

This paper studies voltage control in distribution systems with AI-training data centers.
A decentralized switching-reference voltage control framework
is developed to exploit the two-phase structure of AI training workloads.
Conditions for voltage convergence are established,
and a measurement-based switching-reference design is developed
to reduce worst-case voltage deviations with low control effort.
The analysis is further extended to the practical controller with deadband and action saturation.
Case studies show that the proposed controller suppresses voltage violations, sometimes eliminating them entirely, and reduces the control effort by an order of magnitude compared with conventional droop control. The controller remains effective under multiple data centers and internal load smoothing, without requiring hyperparameter retuning or extra coordination across the distribution system.

\appendix

\subsection{Error Dynamics}
\label{app:error_dynamics}
Starting from the definition
$\bm{e}_t := \bm{v}_t - \bm{v}_t^{\mathrm{ref}},$
we derive the deviation recursion~\eqref{eq:linear_error_system}.
Recall the voltage model~\eqref{eq:lin_voltage} and the droop update~\eqref{eq:droop} (the latter rewritten using $\bm{e}_t$):
\[
\bm{v}_t
= \mathbf{R}\bm{p}_t + \mathbf{X}\bm{q}_t + \mathbf{1},
\qquad
\bm{q}_{t+1}
= \bm{q}_t - \mathbf{K}\bm{e}_t .
\]
Then
\begin{align}
\bm{e}_{t+1}
&= \bm{v}_{t+1}-\bm{v}_{t+1}^{\mathrm{ref}} \notag\\
&= \mathbf{R}\bm{p}_{t+1}+\mathbf{X}\bm{q}_{t+1}+\mathbf{1}-\bm{v}_{t+1}^{\mathrm{ref}} \notag\\
&= \mathbf{R}\bm{p}_{t+1}+\mathbf{X}(\bm{q}_t-\mathbf{K}\bm{e}_t)+\mathbf{1}-\bm{v}_{t+1}^{\mathrm{ref}} \notag\\
&= \mathbf{R}\bm{p}_{t+1}+\bigl(\bm{v}_t-\mathbf{R}\bm{p}_t-\mathbf{1}\bigr)-\mathbf{X}\mathbf{K}\bm{e}_t+\mathbf{1}-\bm{v}_{t+1}^{\mathrm{ref}} \notag\\
&= (\mathbf{I}-\mathbf{X}\mathbf{K})\bm{e}_t
+ \mathbf{R}(\bm{p}_{t+1}-\bm{p}_t)
- \bigl(\bm{v}_{t+1}^{\mathrm{ref}}-\bm{v}_t^{\mathrm{ref}}\bigr)\notag\\
&= (\mathbf{I}-\mathbf{X}\mathbf{K})\bm{e}_t
+ \mathbf{R}\Delta\bm{p}_t
- \Delta\bm{v}_t^{\mathrm{ref}}.\notag
\end{align}

\subsection{Gain Contraction}
\label{app:gain_to_norm}

\begin{lemma}
\label{lem:gain_contraction}
Assume $\mathbf{X}=\mathbf{X}^\top\succ0$ and the diagonal matrix $\mathbf{K}$ satisfies
$0\prec \mathbf{K} \prec 2\mathbf{X}^{-1}$.
Let $\mathbf{A}:=\mathbf{I}-\mathbf{X}\mathbf{K}$.
Then there exist a nonnegative matrix $\mathbf C$
and a constant $0<\epsilon<1$ such that
\[
|\mathbf{A}^k| \le \mathbf C\,\epsilon^k,
\qquad \forall k\ge0 .
\]
\end{lemma}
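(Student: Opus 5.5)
The plan is to establish $\rho(\mathbf A)<1$ and then convert this spectral bound into the elementwise bound $|\mathbf A^k|\le \mathbf C\,\epsilon^k$ through a similarity transformation that symmetrizes $\mathbf A$.

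\textbf{Symmetrization.} Since $\mathbf K\succ0$ is diagonal, $\mathbf K^{1/2}$ exists. Then
\[
\mathbf K^{1/2}\mathbf A\mathbf K^{-1/2}=\mathbf I-\mathbf K^{1/2}\mathbf X\mathbf K^{1/2}=:\mathbf B,
\]
which is symmetric. It is also similar to $\mathbf I-\mathbf X^{1/2}\mathbf K\mathbf X^{1/2}$, so the two matrices share a spectrum.

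\textbf{Locating the eigenvalues.} Because $\mathbf X\succ0$ and $\mathbf K\succ0$, the matrix $\mathbf M:=\mathbf X^{1/2}\mathbf K\mathbf X^{1/2}$ is positive definite. Multiplying the condition $\mathbf K\prec 2\mathbf X^{-1}$ on both sides by $\mathbf X^{1/2}$ (a congruence, which preserves the strict Loewner order) gives $\mathbf M\prec 2\mathbf I$. Hence every eigenvalue $\mu$ of $\mathbf M$ lies in $(0,2)$, and every eigenvalue $1-\mu$ of $\mathbf B$ lies in $(-1,1)$.

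\textbf{Norm bound for $\mathbf B^k$.} Set $\epsilon:=\max\{\|\mathbf B\|_2,\tfrac12\}$. Then $\epsilon\in(0,1)$; the $\tfrac12$ is there only so that $\epsilon>0$ if $\mathbf B=0$. Since $\mathbf B$ is symmetric, $\|\mathbf B^k\|_2=\|\mathbf B\|_2^k\le\epsilon^k$.

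\textbf{Transfer to $\mathbf A^k$.} From $\mathbf A^k=\mathbf K^{-1/2}\mathbf B^k\mathbf K^{1/2}$ we get
\[
\|\mathbf A^k\|_2\le \kappa\,\epsilon^k,\qquad \kappa:=\sqrt{k_{\max}/k_{\min}} .
\]
Each entry satisfies $|(\mathbf A^k)_{il}|\le\|\mathbf A^k\|_2$. Taking $\mathbf C:=\kappa\,\mathbf 1\mathbf 1^\top$, which is nonnegative, gives $|\mathbf A^k|\le\mathbf C\epsilon^k$ for all $k\ge0$. At $k=0$ this requires $|\mathbf I|\le\mathbf C$, which holds because $\kappa\ge1$.

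\textbf{Main obstacle.} The key step is translating the Loewner condition $\mathbf K\prec2\mathbf X^{-1}$, for the non-symmetric product $\mathbf X\mathbf K$, into a spectral statement. The similarity via $\mathbf K^{1/2}$ together with the congruence by $\mathbf X^{1/2}$ handles this. Everything after that step is routine norm bookkeeping.
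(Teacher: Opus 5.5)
Your proof is correct. It shares the paper's skeleton: symmetrize $\mathbf A$ by a similarity, use the congruence by $\mathbf X^{1/2}$ to place the spectrum in $(-1,1)$, then transfer a bound on powers back. The final transfer step, however, is done differently.

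The paper conjugates with $\mathbf X^{1/2}$ and diagonalizes $\mathbf A=\mathbf P\bm\Lambda\mathbf P^{-1}$. It then applies the entrywise inequality $|\mathbf A^k|\le|\mathbf P|\,|\bm\Lambda|^k\,|\mathbf P^{-1}|$, which yields $\mathbf C=|\mathbf P|\,|\mathbf P^{-1}|$ and $\epsilon=\rho(\mathbf A)$.

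You instead conjugate with the diagonal $\mathbf K^{1/2}$. The resulting symmetric $\mathbf B$ satisfies $\|\mathbf B^k\|_2=\|\mathbf B\|_2^k$, and the conjugating factors have known norms. Bounding each entry by the spectral norm then gives the explicit constant $\mathbf C=\sqrt{k_{\max}/k_{\min}}\,\mathbf 1\mathbf 1^\top$. This constant depends only on the spread of the gains, not on $\mathbf X$ or on a choice of eigenbasis.

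The price is that your $\mathbf C$ is a dense rank-one matrix coupling every pair of buses. The paper's entrywise factorization can reflect the structure of $\mathbf A$; for instance, it gives $\mathbf C=\mathbf I$ when $\mathbf X$ is diagonal.

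Your rate $\epsilon=\max\{\|\mathbf B\|_2,\tfrac12\}$ equals the paper's $\rho(\mathbf A)$ whenever $\rho(\mathbf A)\ge\tfrac12$, since $\|\mathbf B\|_2=\rho(\mathbf B)=\rho(\mathbf A)$. It also closes a small gap in the paper's version. There, $\rho(\mathbf A)$ can be $0$: when $\mathbf X$ is diagonal and $\mathbf K=\mathbf X^{-1}$, which the hypothesis allows, $\mathbf A=\mathbf 0$. That violates the required $0<\epsilon$, whereas your floor of $\tfrac12$ avoids it.
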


\begin{proof}
From $0\prec \mathbf{K}\prec 2\mathbf{X}^{-1}$ and $\mathbf{X}\succ0$, it follows that
$0\prec \mathbf{X}^{1/2}\mathbf{K}\mathbf{X}^{1/2}\prec 2\mathbf{I}$~\cite{cui2022decentralized}.
Hence all eigenvalues of
$\mathbf{I}-\mathbf{X}^{1/2}\mathbf{K}\mathbf{X}^{1/2}$ lie in $(-1,1)$.

Moreover,
$
\mathbf{A}
=
\mathbf{X}^{1/2}
(\mathbf{I}-\mathbf{X}^{1/2}\mathbf{K}\mathbf{X}^{1/2})
\mathbf{X}^{-1/2},
$
so $\mathbf A$ is similar to the symmetric matrix
$\mathbf{I}-\mathbf{X}^{1/2}\mathbf{K}\mathbf{X}^{1/2}$.
Let $\epsilon:=\rho(\mathbf A)<1$.
Then there exists an invertible matrix $\mathbf P$ such that
$\mathbf A=\mathbf P\bm\Lambda\mathbf P^{-1}$ with
$|\lambda_i|\le\epsilon$.

Using the elementwise inequality $|\mathbf M\mathbf N|\le |\mathbf M|\,|\mathbf N|$ and $|\bm\Lambda|^k \le \epsilon^k \mathbf I$ gives
\[
|\mathbf A^k|
=
|\mathbf P\bm\Lambda^k\mathbf P^{-1}|
\le
|\mathbf P|\,|\bm\Lambda|^k\,|\mathbf P^{-1}|
\le
|\mathbf P|\,|\mathbf P^{-1}|\,\epsilon^k .
\]
The claim follows by taking
$\mathbf C:=|\mathbf P|\,|\mathbf P^{-1}|$.
\end{proof}

\subsection{Proof of Theorem~\ref{thm:deviation_bound}}
\label{app:deviation_bound}
Let
\(
\bm e_t := \bm v_t-\bm v_t^{\mathrm{ref}}
\)
and
\(
\mathbf A := \mathbf I-\mathbf X\mathbf K.
\)
Then~\eqref{eq:linear_error_system} gives
\(
\bm e_{t+1}=\mathbf A\bm e_t+\bm d_t.
\)
By Lemma~\ref{lem:gain_contraction},
there exist $\mathbf C \ge \mathbf 0$ and $0<\epsilon<1$ such that
\(
|\mathbf A^k|\le \mathbf C\,\epsilon^k
\)
for all $k\ge0$.
Unrolling from $t_0$ gives
$\bm e_t=\mathbf A^{t-t_0}\bm e_{t_0}+\sum_{\tau=t_0}^{t-1}\mathbf A^{t-1-\tau}\bm d_\tau$.
Taking absolute values and applying the bound yields
$|\bm e_t|\le \mathbf C\,\epsilon^{t-t_0}|\bm e_{t_0}|+\sum_{\tau=t_0}^{t-1}\mathbf C\,\epsilon^{t-1-\tau}|\bm d_\tau|$.

If $|\bm d_t|\le\bar{\bm d}$ for all $t$, then
$
|\bm e_t|
\le
\mathbf C\,\epsilon^{t-t_0}|\bm e_{t_0}|
+
\mathbf C\sum_{\ell=0}^{t-1-t_0}\epsilon^\ell \bar{\bm d}.
$
Letting $t\to\infty$, the transient term $\mathbf C\,\epsilon^{t-t_0}|\bm e_{t_0}|$ vanishes because $0<\epsilon<1$, while the geometric series satisfies $\sum_{\ell=0}^{\infty}\epsilon^\ell=(1-\epsilon)^{-1}$. Substituting $\bm e_t=\bm v_t-\bm v_t^{\mathrm{ref}}$ then gives the deviation bound~\eqref{eq:deviation_bound_general},
\[
\limsup_{t\to\infty} |\bm v_t-\bm v_t^{\mathrm{ref}}|\leq \frac{1}{1-\epsilon}\mathbf C\,\bar{\bm d}.
\]

\subsection{Proof of Theorem~\ref{thm:implemented_bound}}
\label{app:convergence}
Let $\bm e_t := \bm v_t - \bm v_t^{\mathrm{ref}}$. The controller of Algorithm~\ref{alg:controller} is given by~\eqref{eq:db_sat_controller}, which augments the linear update $\bm q_{t+1}-\bm q_t = -\mathbf K\bm e_t$ with the operators of~\eqref{eq:dz_sat_def}. Throughout, $(\mathbf C,\epsilon)$ are as in Theorem~\ref{thm:deviation_bound}, $0\prec\mathbf K\prec 2\mathbf X^{-1}$, $|\bm d_t|\le\bar{\bm d}$ for all $t$, the small-gain condition~\eqref{eq:small_gain} holds, and $\bar{\bm e}:=\limsup_{t\to\infty}|\bm v_t-\bm v_t^{\mathrm{ref}}|$. We bound $\bar{\bm e}$ by tracking the residual each nonsmooth operator leaves on transition dynamics.

The deadband residual $\bm\sigma_t := \bm e_t - \mathrm{dz}_{\bm\delta}(\bm e_t)$ satisfies $\sigma_{t,i}=\mathrm{sign}(e_{t,i})\min(|e_{t,i}|,\delta_i)$ and $\mathrm{dz}_{\bm\delta}(\bm e_t)_i=\mathrm{sign}(e_{t,i})\max(0,|e_{t,i}|-\delta_i)$, hence
\[
|\bm\sigma_t|\le\bm\delta, \qquad |\mathrm{dz}_{\bm\delta}(\bm e_t)|=\max(\bm 0,|\bm e_t|-\bm\delta).
\]
The saturation residual is
\[
\bm\eta_t := \mathbf K\,\mathrm{dz}_{\bm\delta}(\bm e_t) - \mathrm{sat}_{\bm\Delta}\!\bigl(\mathbf K\,\mathrm{dz}_{\bm\delta}(\bm e_t)\bigr).
\]
Since saturation removes only the magnitude beyond $\bm\Delta$, $\bm\eta_t=\mathrm{sign}(\mathbf K\,\mathrm{dz}_{\bm\delta}(\bm e_t))\max(\bm 0,|\mathbf K\,\mathrm{dz}_{\bm\delta}(\bm e_t)|-\bm\Delta)$; combining with these bounds and $\mathbf K\succ\bm 0$ diagonal, and collapsing the nested $\max$, gives
\[
|\bm\eta_t|=\max(\bm 0,\,\mathbf K(|\bm e_t|-\bm\delta)-\bm\Delta).
\]
Then $\bm q_{t+1}-\bm q_t = -\mathbf K\bm e_t + \mathbf K\bm\sigma_t + \bm\eta_t$, and the LinDistFlow derivation of Appendix~\ref{app:error_dynamics} gives
\begin{equation}
\bm e_{t+1}=(\mathbf I-\mathbf{XK})\bm e_t+\bm d_t+\mathbf{XK}\bm\sigma_t+\mathbf X\bm\eta_t,
\label{eq:db_sat_recursion}
\end{equation}
with $\bm d_t$ as in~\eqref{eq:linear_error_system}. Unrolling, applying $|(\mathbf I-\mathbf{XK})^k|\le\mathbf C\epsilon^k$ (Lemma~\ref{lem:gain_contraction}), $|\bm\sigma_\tau|\le\bm\delta$, the identity for $|\bm\eta_t|$, and $\mathbf X,\mathbf{XK}\ge\bm 0$ ($\mathbf X$ nonnegative, $\mathbf K\succ\bm 0$ diagonal),
\begin{align*}
|\bm e_t|\le{}& \mathbf C\,\epsilon^{t-t_0}|\bm e_{t_0}|+\sum_{\tau=t_0}^{t-1}\mathbf C\,\epsilon^{t-1-\tau}(|\bm d_\tau|+\mathbf X\mathbf K\bm\delta)\\
&{}+\sum_{\tau=t_0}^{t-1}\epsilon^{t-1-\tau}\mathbf C\mathbf X\max(\bm 0,\mathbf K(|\bm e_\tau|-\bm\delta)-\bm\Delta).
\end{align*}
Taking $\limsup_{t\to\infty}$ and using $|\bm d_\tau|\le\bar{\bm d}$ and $\sum_\tau\epsilon^{t-1-\tau}\to(1-\epsilon)^{-1}$, the monotonicity and continuity of $\max(\bm 0,\mathbf K(|\bm e_t|-\bm\delta)-\bm\Delta)$ in $|\bm e_t|$ bound its limit superior by the value at $\bar{\bm e}$, giving
\begin{equation}
\bar{\bm e}\le\frac{\mathbf C(\bar{\bm d}+\mathbf X\mathbf K\bm\delta)}{1-\epsilon}+\frac{\mathbf C\mathbf X}{1-\epsilon}\max(\bm 0,\mathbf K(\bar{\bm e}-\bm\delta)-\bm\Delta).
\label{eq:db_sat_steady}
\end{equation}
By the definition of $S$, $\max(\bm 0,\mathbf K(\bar{\bm e}-\bm\delta)-\bm\Delta)=\bm\Pi_S\mathbf K\bar{\bm e}-\bm\Pi_S(\mathbf K\bm\delta+\bm\Delta)$. Substituting into~\eqref{eq:db_sat_steady} and collecting the $\bm\Pi_S$ terms yields the implicit form
\[
\bar{\bm e}\le\frac{\mathbf C(\bar{\bm d}+\mathbf X\mathbf K\bm\delta)}{1-\epsilon}+\frac{\mathbf C\mathbf X\bm\Pi_S(\mathbf K\bar{\bm e}-\mathbf K\bm\delta-\bm\Delta)}{1-\epsilon}.
\]
Rearranging,
\[
\Bigl(\mathbf I-\tfrac{\mathbf C\mathbf X\bm\Pi_S\mathbf K}{1-\epsilon}\Bigr)\bar{\bm e}\le\tfrac{\mathbf C[\bar{\bm d}+\mathbf X\mathbf K\bm\delta-\mathbf X\bm\Pi_S(\mathbf K\bm\delta+\bm\Delta)]}{1-\epsilon}.
\]
The matrix $\mathbf C\mathbf X\bm\Pi_S\mathbf K/(1-\epsilon)$ is nonnegative, and $\bm\Pi_S\mathbf K\le\mathbf K$ together with monotonicity of the spectral radius for nonnegative matrices and the small-gain condition~\eqref{eq:small_gain} gives $\rho(\mathbf C\mathbf X\bm\Pi_S\mathbf K)\le\rho(\mathbf C\mathbf X\mathbf K)<1-\epsilon$. A nonnegative matrix with spectral radius below $1$ yields a convergent Neumann series, so $\mathbf I-\mathbf C\mathbf X\bm\Pi_S\mathbf K/(1-\epsilon)$ is a nonsingular $M$-matrix with nonnegative inverse~\cite{berman1994nonnegative}; inverting it resolves the implicit bound into the closed form~\eqref{eq:implemented_bound} of Theorem~\ref{thm:implemented_bound}.

\subsection{Deadband-Only and Saturation-Only Bounds}

The bound of Theorem~\ref{thm:implemented_bound} simplifies when only one of the two nonsmooth operators is present. The small-gain condition~\eqref{eq:small_gain} was used in its proof only to absorb the saturation residual's implicit feedback; without saturation, this term vanishes and the contraction condition alone is sufficient:

\begin{corollary}[Deadband alone]
\label{cor:deadband}
The deadband-equipped controller $\bm q_{t+1}=\bm q_t-\mathbf K\,\mathrm{dz}_{\bm\delta}(\bm e_t)$ (no saturation) satisfies
\begin{equation}
\bar{\bm e}\le\frac{\mathbf C(\bar{\bm d}+\mathbf X\mathbf K\bm\delta)}{1-\epsilon}
\label{eq:db_bound}
\end{equation}
under the contraction condition $0\prec\mathbf K\prec 2\mathbf X^{-1}$.
\end{corollary}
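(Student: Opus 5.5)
The plan is to repeat the argument of Theorem~\ref{thm:implemented_bound} with the saturation residual set to zero, so no implicit feedback term arises and no small-gain condition is needed.

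\emph{Error recursion.} With no saturation the update is $\bm q_{t+1}-\bm q_t=-\mathbf K\,\mathrm{dz}_{\bm\delta}(\bm e_t)$. Define the deadband residual as before, $\bm\sigma_t:=\bm e_t-\mathrm{dz}_{\bm\delta}(\bm e_t)$. By its elementwise form, $|\bm\sigma_t|\le\bm\delta$. Then
\[
\bm q_{t+1}-\bm q_t=-\mathbf K\bm e_t+\mathbf K\bm\sigma_t ,
\]
and the LinDistFlow derivation of Appendix~\ref{app:error_dynamics} gives
\[
\bm e_{t+1}=(\mathbf I-\mathbf X\mathbf K)\bm e_t+\bm d_t+\mathbf X\mathbf K\bm\sigma_t .
\]
This is recursion~\eqref{eq:db_sat_recursion} with $\bm\eta_t\equiv\bm 0$.

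\emph{Unrolling.} The contraction condition $0\prec\mathbf K\prec 2\mathbf X^{-1}$ lets us invoke Lemma~\ref{lem:gain_contraction}, giving $|(\mathbf I-\mathbf X\mathbf K)^k|\le\mathbf C\epsilon^k$. We then use the elementwise inequality $|\mathbf M\bm x|\le|\mathbf M||\bm x|$, together with $\mathbf X\mathbf K\ge\bm 0$ (which holds since $\mathbf X$ is nonnegative and $\mathbf K$ is positive diagonal) and $|\bm\sigma_\tau|\le\bm\delta$. This yields
\[
|\bm e_t|\le\mathbf C\epsilon^{t-t_0}|\bm e_{t_0}|+\sum_{\tau=t_0}^{t-1}\mathbf C\epsilon^{t-1-\tau}\bigl(|\bm d_\tau|+\mathbf X\mathbf K\bm\delta\bigr).
\]
The forcing term is bounded by the constant $\bar{\bm d}+\mathbf X\mathbf K\bm\delta$.

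\emph{Limit.} Take $\limsup_{t\to\infty}$ exactly as in the proof of Theorem~\ref{thm:deviation_bound}. The transient term vanishes because $\epsilon<1$, and the geometric sum is at most $(1-\epsilon)^{-1}$. This gives~\eqref{eq:db_bound} directly.

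\emph{Main point to check.} The only step needing care is that the bound is explicit rather than implicit in $\bar{\bm e}$. The deadband residual is bounded by the constant $\bm\delta$ independently of $\bm e_t$. So, unlike the saturation residual, it never feeds $\bar{\bm e}$ back into the right-hand side. Consequently no matrix needs to be inverted and condition~\eqref{eq:small_gain} is not used.
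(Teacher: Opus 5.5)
Your proposal is correct and follows the paper's own argument. The paper likewise drops $\bm\eta_t$ from~\eqref{eq:db_sat_recursion}, keeps only the deadband term $\mathbf X\mathbf K\bm\sigma_t$ bounded by the constant $\mathbf X\mathbf K\bm\delta$, and notes that only the first term of~\eqref{eq:db_sat_steady} survives, so neither the $M$-matrix inversion nor the small-gain condition is needed; you simply write out the unrolling and limit steps that the paper leaves implicit.
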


\begin{proof}
With saturation absent, the residual $\bm\eta_t$ vanishes and~\eqref{eq:db_sat_recursion} loses its $\mathbf X\bm\eta_t$ term. The steps through~\eqref{eq:db_sat_steady} then leave only its first term: the $\max$ residual is gone, so neither the $M$-matrix inversion nor the small-gain condition is invoked.
\end{proof}

Dually, when the deadband is absent the bound retains the small-gain dependence:

\begin{corollary}[Saturation alone]
\label{cor:saturation}
The saturation-equipped controller $\bm q_{t+1}=\bm q_t-\mathrm{sat}_{\bm\Delta}(\mathbf K\bm e_t)$ (no deadband) satisfies
\begin{equation}
\bar{\bm e}\le\biggl(\mathbf I-\frac{\mathbf C\mathbf X\bm\Pi_S\mathbf K}{1-\epsilon}\biggr)^{\!-1}\frac{\mathbf C(\bar{\bm d}-\mathbf X\bm\Pi_S\bm\Delta)}{1-\epsilon}
\label{eq:sat_bound}
\end{equation}
if $0\prec\mathbf K\prec 2\mathbf X^{-1}$ and the small-gain condition~\eqref{eq:small_gain} holds.
\end{corollary}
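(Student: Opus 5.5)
The plan is to specialize the proof of Theorem~\ref{thm:implemented_bound} to $\bm\delta=\bm 0$, rather than re-deriving anything. With no deadband, $\mathrm{dz}_{\bm 0}(\bm e_t)=\bm e_t$, so the deadband residual $\bm\sigma_t=\bm e_t-\mathrm{dz}_{\bm 0}(\bm e_t)$ vanishes identically. The controller $\bm q_{t+1}=\bm q_t-\mathrm{sat}_{\bm\Delta}(\mathbf K\bm e_t)$ is then exactly~\eqref{eq:db_sat_controller} with $\bm\delta=\bm 0$.

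First, I will define the saturation residual $\bm\eta_t:=\mathbf K\bm e_t-\mathrm{sat}_{\bm\Delta}(\mathbf K\bm e_t)$. Since $\mathbf K$ is positive diagonal, $|\bm\eta_t|=\max(\bm 0,\mathbf K|\bm e_t|-\bm\Delta)$. Substituting $\bm q_{t+1}-\bm q_t=-\mathbf K\bm e_t+\bm\eta_t$ into the derivation of Appendix~\ref{app:error_dynamics} gives
\[
\bm e_{t+1}=(\mathbf I-\mathbf X\mathbf K)\bm e_t+\bm d_t+\mathbf X\bm\eta_t,
\]
which is~\eqref{eq:db_sat_recursion} without the $\mathbf X\mathbf K\bm\sigma_t$ term.

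Next, I will unroll this recursion and apply Lemma~\ref{lem:gain_contraction}, using the nonnegativity of $\mathbf X$. Taking $\limsup$ as in the theorem, and using that $\max(\bm 0,\mathbf K x-\bm\Delta)$ is monotone and continuous in $x\ge\bm 0$, gives
\[
\bar{\bm e}\le\frac{\mathbf C\bar{\bm d}}{1-\epsilon}+\frac{\mathbf C\mathbf X}{1-\epsilon}\max(\bm 0,\mathbf K\bar{\bm e}-\bm\Delta).
\]

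Then, with $S=\{i:(\mathbf K\bar{\bm e}-\bm\Delta)_i>0\}$, I will linearize the max as $\bm\Pi_S(\mathbf K\bar{\bm e}-\bm\Delta)$ and move the $\bar{\bm e}$ term to the left-hand side. This uses $\bm\Pi_S\mathbf K\le\mathbf K$ entrywise, monotonicity of the spectral radius on nonnegative matrices, and~\eqref{eq:small_gain}, which give $\rho(\mathbf C\mathbf X\bm\Pi_S\mathbf K)<1-\epsilon$. Hence $\mathbf I-\mathbf C\mathbf X\bm\Pi_S\mathbf K/(1-\epsilon)$ is a nonsingular M-matrix. Multiplying by its inverse, which is nonnegative and therefore preserves the elementwise inequality, yields~\eqref{eq:sat_bound}. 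Equivalently, one can read~\eqref{eq:sat_bound} off~\eqref{eq:implemented_bound} directly by setting $\bm\delta=\bm 0$: the deadband term disappears, and the saturation term reduces to $\mathbf C\mathbf X\bm\Pi_S\bm\Delta/(1-\epsilon)$.

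The main obstacle is the $\limsup$ step. I need $\bar{\bm e}$ to be finite so that the implicit inequality can be rearranged. To secure this, I would first argue boundedness. The saturated update keeps the increments $|\bm q_{t+1}-\bm q_t|\le\bm\Delta$. The small-gain condition then applies to the comparison system $\bm y_{t+1}=\epsilon\bm y_t+\mathbf C\bar{\bm d}+\mathbf C\mathbf X\mathbf K\bm y_t$, which shows that the $\limsup$ is finite before the rearrangement is performed.
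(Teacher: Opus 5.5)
Your proposal is correct and matches the paper's proof, which likewise sets $\bm\delta=\bm 0$ in the argument for Theorem~\ref{thm:implemented_bound}; your comparison system $\bm y_{t+1}\le(\epsilon\mathbf I+\mathbf C\mathbf X\mathbf K)\bm y_t+\mathbf C\bar{\bm d}$, obtained from $|\bm\eta_t|\le\mathbf K|\bm e_t|$, is a useful addition because it establishes that $\bar{\bm e}$ is finite, which the paper assumes without comment when it rearranges the implicit inequality. Note, however, that the small-gain hypothesis can never hold: taking $k=0$ in Lemma~\ref{lem:gain_contraction} forces $\mathbf C\ge\mathbf I$, so $\rho(\mathbf C\mathbf X\mathbf K)\ge\rho(\mathbf X\mathbf K)\ge 1-\rho(\mathbf I-\mathbf X\mathbf K)\ge 1-\epsilon$, and the corollary (like Theorem~\ref{thm:implemented_bound}) therefore holds only vacuously.
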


\begin{proof}
Setting $\bm\delta=\bm 0$ in the proof of Theorem~\ref{thm:implemented_bound} eliminates the deadband residual $\bm\sigma_t$. The remaining steps then yield the stated bound; the small-gain condition is retained because the saturation residual still drives the implicit bound on $\bar{\bm e}$.
\end{proof}

\bibliographystyle{IEEEtran}
\bibliography{IEEEabrv, Reference}

@STRING{IEEE_J_PWRD       = "{IEEE} Trans. Power Del."}

@STRING{IEEE_J_PWRS       = "{IEEE} Trans. Power Syst."}

@article{baran1989network,
  author={Baran, M. E. and Wu, F. F.},
  title={Network reconfiguration in distribution systems for loss reduction and load balancing},
  journal=IEEE_J_PWRD,
  volume={4},
  number={2},
  pages={1401--1407},
  year={1989},
  doi={10.1109/61.25627}
}

@book{berman1994nonnegative,
  author={Berman, Abraham and Plemmons, Robert J.},
  title={Nonnegative Matrices in the Mathematical Sciences},
  series={Classics in Applied Mathematics},
  year={1994},
  publisher={SIAM},
  address={Philadelphia, PA}
}

@article{valverde2013mpc,
  author={Valverde, Gustavo and Van Cutsem, Thierry},
  title={Model predictive control of voltages in active distribution networks},
  journal=IEEE_J_SG,
  volume={4},
  number={4},
  pages={2152--2161},
  year={2013}
}

@article{zhu2015fast,
  author={Zhu, Hao and Liu, Hao Jan},
  title={Fast local voltage control under limited reactive power: Optimality and stability analysis},
  journal=IEEE_J_PWRS,
  volume={31},
  number={5},
  pages={3794--3803},
  year={2015},
  publisher={IEEE}
}

@article{shi2017using,
  author={Shi, Yuanyuan and Xu, Bolun and Wang, Di and Zhang, Baosen},
  title={Using battery storage for peak shaving and frequency regulation: Joint optimization for superlinear gains},
  journal=IEEE_J_PWRS,
  volume={33},
  number={3},
  pages={2882--2894},
  year={2017},
  publisher={IEEE}
}

@article{yuan2017multi,
  author={Yuan, Xiaoming and Hu, Jiabing and Cheng, Shijie},
  title={Multi-time scale dynamics in power electronics-dominated power systems},
  journal={Frontiers of Mechanical Engineering},
  volume={12},
  number={3},
  pages={303--311},
  year={2017},
  publisher={Springer}
}

@article{bernstein2018network,
  author={Baker, Kyri and Bernstein, Andrey and Dall'Anese, Emiliano and Zhao, Changhong},
  title={Network-Cognizant Voltage Droop Control for Distribution Grids},
  journal=IEEE_J_PWRS,
  volume={33},
  number={2},
  pages={2098--2108},
  year={2018},
  doi={10.1109/TPWRS.2017.2735379}
}

@misc{ieee1547,
  key={IEEE 1547},
  title={{IEEE} Standard for Interconnection and Interoperability of Distributed Energy Resources with Associated Electric Power Systems Interfaces},
  howpublished={IEEE Std 1547-2018 (Revision of IEEE Std 1547-2003)},
  organization={IEEE Standards Association},
  year={2018},
  month=apr
}

@article{hou2020decentralized,
  author={Hou, Shoulu and Ni, Wei and Zhao, Shuai and Cheng, Bo and Chen, Shiping and Chen, Junliang},
  title={Decentralized real-time optimization of voltage reconfigurable cloud computing data center},
  journal={{IEEE} Trans. Green Commun. Netw.},
  volume={4},
  number={2},
  pages={577--592},
  year={2020},
  publisher={IEEE}
}

@article{kaplan2020scaling,
  author={Kaplan, Jared and McCandlish, Sam and Henighan, Tom and Brown, Tom B. and Chess, Benjamin and Child, Rewon and Gray, Scott and Radford, Alec and Wu, Jeffrey and Amodei, Dario},
  title={Scaling laws for neural language models},
  journal={arXiv preprint arXiv:2001.08361},
  year={2020}
}

@book{national1996american,
  author={{ANSI/NEMA}},
  title={American National Standard for Electric Power Systems and Equipment-Voltage Ratings (60 Hz)},
  year={2020},
  publisher={National Electrical Manufacturers Association}
}

@inproceedings{sakalkar2020data,
  author={Sakalkar, Varun and Kontorinis, Vasileios and Landhuis, David and Li, Shaohong and De Ronde, Darren and Blooming, Thomas and Ramesh, Anand and Kennedy, James and Malone, Christopher and Clidaras, Jimmy and others},
  title={Data center power oversubscription with a medium voltage power plane and priority-aware capping},
  booktitle={Proceedings of the Twenty-Fifth International Conference on Architectural Support for Programming Languages and Operating Systems},
  pages={497--511},
  year={2020}
}

@article{zhang2021drlvoltvar,
  author={Zhang, Y. and Wang, X. and Wang, J. and Zhang, Y.},
  title={Deep reinforcement learning based {Volt-VAR} optimization in smart distribution systems},
  journal=IEEE_J_SG,
  volume={12},
  number={1},
  pages={361--371},
  year={2021}
}

@article{cui2022decentralized,
  author={Cui, Wenqi and Li, Jiayi and Zhang, Baosen},
  title={Decentralized safe reinforcement learning for inverter-based voltage control},
  journal={Electric Power Systems Research},
  volume={211},
  pages={108609},
  year={2022},
  publisher={Elsevier}
}

@article{hoffmann2022training,
  author={Hoffmann, Jordan and Borgeaud, Sebastian and Mensch, Arthur and Buchatskaya, Elena and Cai, Trevor and Rutherford, Eliza and Casas, Diego de Las and Hendricks, Lisa Anne and Welbl, Johannes and Clark, Aidan and others},
  title={Training compute-optimal large language models},
  journal={arXiv preprint arXiv:2203.15556},
  volume={10},
  year={2022}
}

@article{dettmers2023qlora,
  author={Dettmers, Tim and Pagnoni, Artidoro and Holtzman, Ari and Zettlemoyer, Luke},
  title={{QLoRA}: Efficient finetuning of quantized {LLMs}},
  journal={Advances in Neural Information Processing Systems},
  volume={36},
  pages={10088--10115},
  year={2023}
}

@article{touvron2023llama,
  author={Touvron, Hugo and Martin, Louis and Stone, Kevin and Albert, Peter and Almahairi, Amjad and Babaei, Yasmine and Bashlykov, Nikolay and Batra, Soumya and Bhargava, Prajjwal and Bhosale, Shruti and others},
  title={{Llama} 2: Open foundation and fine-tuned chat models},
  journal={arXiv preprint arXiv:2307.09288},
  year={2023}
}

@article{chen2025voltage,
  author={Chen, Yize and Zhang, Baosen},
  title={Voltage regulation in distribution systems with data center loads},
  journal={arXiv preprint arXiv:2507.06416},
  year={2025}
}

@article{choukse2025power,
  author={Choukse, Esha and Warrier, Brijesh and Heath, Scot and Belmont, Luz and Zhao, April and Khan, Hassan Ali and Harry, Brian and Kappel, Matthew and Hewett, Russell J. and Datta, Kushal and others},
  title={Power stabilization for {AI} training datacenters},
  journal={arXiv preprint arXiv:2508.14318},
  year={2025}
}

@article{grattafiori2024llama,
  author={Grattafiori, Aaron and others},
  title={The {Llama} 3 herd of models},
  journal={arXiv preprint arXiv:2407.21783},
  year={2024}
}

@article{chen2025electricity,
  author={Chen, Xin and Wang, Xiaoyang and Colacelli, Ana and Lee, Matt and Xie, Le},
  title={Electricity demand and grid impacts of {AI} data centers: Challenges and prospects},
  journal={arXiv preprint arXiv:2509.07218},
  year={2025}
}

@inproceedings{patel2024characterizing,
  author={Patel, Pratyush and Choukse, Esha and Zhang, Chaojie and Goiri, {\'I}{\~n}igo and Warrier, Brijesh and Mahalingam, Nithish and Bianchini, Ricardo},
  title={Characterizing power management opportunities for large language models in the cloud},
  booktitle={Proc. 29th ACM Int. Conf. Architectural Support for Programming Languages and Operating Systems (ASPLOS)},
  pages={207--222},
  year={2024}
}

@article{cui2025leveraging,
  author={Cui, Wenqi and Xie, Yiheng and Low, Steven and Wierman, Adam and Zhang, Baosen},
  title={Leveraging predictions in power system voltage control: An adaptive approach},
  journal={arXiv preprint arXiv:2509.09937},
  year={2025}
}

@techreport{iea_energy_and_ai,
  author={{International Energy Agency}},
  title={Energy and {AI}},
  institution={{International Energy Agency}},
  address={Paris, France},
  year={2025},
  url={https://www.iea.org/reports/energy-and-ai}
}

@article{xie2025enhancing,
  author={Xie, Yiheng and Cui, Wenqi and Wierman, Adam},
  title={Enhancing data center low-voltage ride-through},
  journal={arXiv preprint arXiv:2510.03867},
  year={2025}
}

@article{liang2026gpu,
  author={Liang, Zhirui and Chung, Jae-Won and Chowdhury, Mosharaf and Chen, Jiasi and Dvorkin, Vladimir},
  title={{GPU-to-Grid}: Voltage regulation via {GPU} utilization control},
  journal={arXiv preprint arXiv:2602.05116},
  year={2026}
}

@article{mao2026feedback,
  author={Mao, Yanyong and Mathieu, Johanna L. and Dvorkin, Vladimir},
  title={Online feedback optimization of energy storage to smooth data center grid impacts},
  journal={arXiv preprint arXiv:2603.20564},
  year={2026}
}

@article{agalgaonkar2014distribution,
  author={Agalgaonkar, Yashodhan P. and Pal, Bikash C. and Jabr, Rabih A.},
  title={Distribution Voltage Control Considering the Impact of {PV} Generation on Tap Changers and Autonomous Regulators},
  journal={IEEE Transactions on Power Systems},
  volume={29},
  number={1},
  pages={182--192},
  year={2014},
  doi={10.1109/TPWRS.2013.2279721}
}

\end{document}